\documentclass[11pt]{article}
\usepackage{graphicx}
\usepackage{gensymb}
\usepackage{amsmath}
\usepackage{algpseudocode}

\usepackage{algorithm}
\usepackage[margin=1in]{geometry}
\usepackage[utf8]{inputenc}
\usepackage{amsthm}
\newtheorem{theorem}{Theorem}
\newtheorem{remark}{Remark}
\newtheorem{definition}{Definition}

\newtheorem{lemma}[theorem]{Lemma}
\usepackage{color}
\usepackage{tikz,comment}



\usepackage{times}
\usepackage[nottoc]{tocbibind}
\usepackage{t1enc}
\usepackage{graphicx}
\usepackage{textcomp}
\usepackage{color,xcolor}
\usepackage{epstopdf}
\usepackage{rotating}
\usepackage{xifthen}
\usepackage{paralist}
\usepackage[nointegrals]{wasysym}
\usepackage{multirow}
\usepackage[none]{hyphenat}
\usepackage{caption} 
\usepackage{microtype}
\date{}

\newcommand{\mycommfont}[1]{\textcolor{blue}{#1}} 

\algrenewcommand\algorithmiccomment[1]{\hfill\mycommfont{#1}}

\AtBeginDocument{%
  }

\begin{document}


\title{Optimizing Robot Dispersion on Grids: with and without Fault Tolerance}

\author{Rik Banerjee \footnote{Indian Statistical Institute, Kolkata, India, {\em rikarjya@gmail.com}}, Manish Kumar\footnote{Indian Institute of Technology, Madras, India, {\em manishsky27@gmail.com}}, and Anisur Rahaman Molla\footnote{Indian Statistical Institute, Kolkata, India, {\em anisurpm@gmail.com}}}
\maketitle              

\begin{abstract}
The introduction and study of dispersing mobile robots across the nodes of an anonymous graph have recently gained traction and have been explored within various graph classes and settings. While optimal dispersion solution was established for {\em oriented} grids [Kshemkalyani et al., WALCOM 2020], a significant unresolved question pertains to whether achieving optimal dispersion is feasible on an {\em unoriented} grid. This paper investigates the dispersion problem on unoriented grids, considering both non-faulty and faulty robots. The challenge posed by unoriented grids lies in the absence of a clear sense of direction for a single robot moving between nodes, as opposed to the straightforward navigation of oriented grids.  

We present three deterministic algorithms tailored to our robot model. The first and second algorithms deal with the dispersion of faulty and non-faulty robots, ensuring both time and memory optimization in oriented and unoriented grids, respectively. Faulty robots that are prone to crashing at any time, causing permanent failure. In both settings, we achieve dispersion in $O(\sqrt{n})$ rounds while requiring $O(\log n)$ bits of memory per robot. The third algorithm tackles faulty robots prone to crash faults in an unoriented grid. In this scenario, our algorithm operates within $O(\sqrt{n} \log n)$ time and uses $O(\sqrt{n} \log n)$ bits of memory per robot. The robots need to know the value of $n$ for termination.
\end{abstract}
{\bf Keywords:} Mobile agents, Mobile robots, Grid graph, Mess network, Crash-fault robots, Robot's dispersion, Distributed algorithm

\section{Introduction}\label{sec: introduction}
The distribution of autonomous mobile robots for achieving coverage across an area is a highly pertinent challenge within distributed robotics, as highlighted in \cite{HABFM02,HABFM03}. More recently, this issue has been framed in the context of graphs in the following manner: In a scenario where $k$ robots are initially situated on the nodes of an $n$-node graph, the robots undertake autonomous repositioning to achieve a final configuration wherein each robot occupies a distinct graph node (referred to as the {\em dispersion} problem) \cite{AM18}. This problem holds practical significance across various applications, such as the repositioning of self-driving electric cars (analogous to robots) to available charging stations (equivalent to nodes). This assumption involves the cars utilizing intelligent communication methods to locate unoccupied charging stations \cite{AM18,KA19}. Furthermore, the problem's importance stems from its interconnectedness with numerous other extensively researched challenges in autonomous robot coordination, including exploration, scattering, and load balancing \cite{AM18,KA19}.

The dispersion of mobile robots has garnered attention across various graph classes, including trees\cite{AM18,KA19}, rings\cite{AM18, KA19, MMM21}, arbitrary graphs \cite{AM18, KMS19, KMS20, KMS22, KS21, SSKM20}, dynamic graphs \cite{KMS20-dynamic}, directed graphs \cite{IPS22}. In the grid graph, the problem was explored by Kshemkalyani et al. \cite{KMS20}, but they considered oriented grid (called planar grid) and non-faulty robots exclusively. Orientation plays a pivotal role in the symmetric graphs, Barrière et. al. studied the scattering of autonomous mobile robots in the grid \cite{barriere2011uniform} and Becha et al. constructed a sense of direction by mobile robots in a torus \cite{becha2007optimal}.

{\bf Oriented vs Unoriented Grid:} In an oriented grid (see Figure~\ref{fig:planar_grid}), the ports are organized in such a way that allows a single robot to traverse the grid along a path with a clear sense of direction. When a robot enters a node via an incoming port, it simply needs to select the second port (i.e., leave one port after the incoming port and select the next port) as its outgoing port to continue moving in the same direction. In Figure~\ref{fig:planar_grid}, as a robot enters from port~1 at node $u$, it has the sense that the straight path leads from port~3 to node $v$. However, this straightforward approach is not applicable in an unoriented grid where the ports are interconnected arbitrarily, as depicted in Figure~\ref{fig:non_planar_grid}. In such a scenario, robots cannot distinguish whether they are moving in the same direction (i.e., along a row or column) or traversing in a cycle or zigzag manner across the unoriented grid. In Figure~\ref {fig:non_planar_grid}, a robot enters from the port~1 at node $x$. It is tough to decide based on the edges which path leads in the straight direction to $y$. In the grid, it appears that port~4 leads to the $y$ unlike port~3.
Consequently, it's not feasible to adapt the algorithm proposed in \cite{KMS20} to work in an unoriented grid.






\begin{figure}[htbp]
    \centering
    \begin{minipage}{0.45\textwidth}
        \centering
\begin{tikzpicture}[scale=1.3]
    \node[draw,circle,fill=blue,inner sep=0.07cm] (A1) at (0,0) {};
    \node[draw,circle,fill=blue,inner sep=0.07cm] (A2) at (1,0) {};
    \node[draw,circle,fill=blue,inner sep=0.07cm] (A3) at (2,0) {};
    \node[draw,circle,fill=blue,inner sep=0.07cm] (A4) at (3,0) {};
    
    \node[draw,circle,fill=blue,inner sep=0.07cm] (B1) at (0,1) {};
    \node[draw,circle,fill=blue,inner sep=0.07cm,label=below:$u$] (B2) at (1,1) {};
    \node[draw,circle,fill=blue,inner sep=0.07cm,label=below:$v$] (B3) at (2,1) {};
    \node[draw,circle,fill=blue,inner sep=0.07cm] (B4) at (3,1) {};
    
    \node[draw,circle,fill=blue,inner sep=0.07cm] (C1) at (0,2) {};
    \node[draw,circle,fill=blue,inner sep=0.07cm] (C2) at (1,2) {};
    \node[draw,circle,fill=blue,inner sep=0.07cm] (C3) at (2,2) {};
    \node[draw,circle,fill=blue,inner sep=0.07cm] (C4) at (3,2) {};
    
    \node[draw,circle,fill=blue,inner sep=0.07cm] (D1) at (0,3) {};
    \node[draw,circle,fill=blue,inner sep=0.07cm] (D2) at (1,3) {};
    \node[draw,circle,fill=blue,inner sep=0.07cm] (D3) at (2,3) {};
    \node[draw,circle,fill=blue,inner sep=0.07cm] (D4) at (3,3) {};
    
    \draw (A1) -- (A2);
    \draw (A2) -- (A3);
    \draw (A3) -- (A4);
    
    \draw (B1) -- (B2);
    \draw (B2) -- node[right = 10, above,red]{1}(B3);
    \draw (B3) -- (B4);

    \draw (B1) --node[right = 10, above,red] {1} (B2);
    \draw (B2) node[right = 10, above,red]{3}--(B3);
    \draw (B3) node[right = 10, above,red]{3}-- (B4);
    
    \draw (C1) -- (C2);
    \draw (C2) -- (C3);
    \draw (C3) -- (C4);
    
    \draw (D1) -- (D2);
    \draw (D2) -- (D3);
    \draw (D3) -- (D4);
    
    \draw (A1) -- (B1);
    \draw (A2) -- (B2);
    \draw (A3)node[right=5, above=16,red]{2} -- (B3);
    \draw (A4) -- (B4);

    \draw (A2) node[right=5, above=16,red]{2}-- (B2);
    \draw (A3) -- (B3);
    
    \draw (B1) -- (C1);
    \draw (B2)node[right, above=2, red]{4} -- (C2);
    \draw (B3) node[right, above=2, red]{4}-- (C3);
    \draw (B4) -- (C4);

    \draw (B2) -- (C2);
    \draw (B3) -- (C3);
    
    \draw (C1) -- (D1);
    \draw (C2) -- (D2);
    \draw (C3) -- (D3);
    \draw (C4) -- (D4);
\end{tikzpicture}

    \caption{$16$ nodes oriented square grid.}
    \label{fig:planar_grid}
    \end{minipage}
    \hfill
    \begin{minipage}{0.45\textwidth}
        \centering
       \begin{tikzpicture}[scale=1.1]
  \node[draw,circle,fill=blue,inner sep=0.07cm] (n00) at (0,0) {};
  \node[draw,circle,fill=blue,inner sep=0.07cm] (n01) at (0,1) {};
  \node[draw,circle,fill=blue,inner sep=0.07cm] (n02) at (0,2) {};
  \node[draw,circle,fill=blue,inner sep=0.07cm] (n03) at (0,3) {};

  \node[draw,circle,fill=blue,inner sep=0.07cm] (n10) at (1,0) {};
  \node[draw,circle,fill=blue,inner sep=0.07cm,label=below:$x$] (n11) at (1,1) {};
  \node[draw,circle,fill=blue,inner sep=0.07cm] (n12) at (1,2) {};
  \node[draw,circle,fill=blue,inner sep=0.07cm] (n13) at (1,3) {};

  \node[draw,circle,fill=blue,inner sep=0.07cm] (n20) at (2,0) {};
  \node[draw,circle,fill=blue,inner sep=0.07cm,label=below:$y$] (n21) at (2,1) {};
  \node[draw,circle,fill=blue,inner sep=0.07cm] (n22) at (2,2) {};
  \node[draw,circle,fill=blue,inner sep=0.07cm] (n23) at (2,3) {};

  \node[draw,circle,fill=blue,inner sep=0.07cm] (n30) at (3,0) {};
  \node[draw,circle,fill=blue,inner sep=0.07cm] (n31) at (3,1) {};
  \node[draw,circle,fill=blue,inner sep=0.07cm] (n32) at (3,2) {};
  \node[draw,circle,fill=blue,inner sep=0.07cm] (n33) at (3,3) {};

  \draw (n00) to[out=95, in=75] (n01);
  \draw (n00) to[out=225, in=95] (n10);
  \draw (n01) to[out=345, in=155] (n02);
  \draw (n01) to[out=265, in=275] node[right = 10, above,red]{1} (n11);
  \draw (n02) to[out=185, in=195] (n03);
  \draw (n02) to[out=140, in=115] (n12);
  \draw (n13) to[out=345, in=235] (n03);
  \draw (n10) to[out=245, in=355]node[above=10,red]{2} (n11);
  \draw (n10) to[out=145, in=275] (n20);
  \draw (n11) node[right, above =1 ,red]{4} to[out=415, in=295]  (n12);
  \draw (n11) node[right = 5,red]{3} to[out=125, in=215]  (n21);
  \draw (n11) node[right = 20,red]{1} to[out=125, in=215]  (n21);
  \draw (n12) to[out=435, in=235] (n13);
  \draw (n12) to[out=445, in=255] (n22);
  \draw (n13) to[out=455, in=275] (n23);
  \draw (n20) to[out=465, in=295] node[right=5,above=5,red]{2}(n21);
  \draw (n20) to[out=475, in=315] (n30);
  \draw (n21) node[left =2, above =3 ,red]{4}to[out=485, in=325] (n22);
  \draw (n21) node[right = 5, above,red]{3}to[out=145, in=345] (n31);
  \draw (n22) to[out=245, in=365] (n23);
  \draw (n22) to[out=345, in=385] (n32);
  \draw (n23) to[out=445, in=405] (n33);
  \draw (n30) to[out=45, in=425] (n31);
  \draw (n31) to[out=145, in=445] (n32);
  \draw (n32) to[out=245, in=4655] (n33);
\end{tikzpicture}

    \caption{$16$ nodes unoriented square grid.}
    \label{fig:non_planar_grid}
    \end{minipage}
\end{figure}

We provide three novel deterministic algorithms for dispersion on oriented and unoriented grid graphs. Our first algorithm works with faulty robots in an oriented grid graph. The second algorithm works with non-faulty robots and the third algorithm works with faulty robots in an unoriented grid graph, in which, a faulty robot can crash at any time during the execution of the protocol and never respond after crashing. Among the best-known results on dispersion \cite{KS21, CKMS23}, the paper \cite{KS21} considered non-faulty robots, and \cite{CKMS23} considered faulty robots. However, applying the arbitrary graph results of \cite{KS21,CKMS23} to grid graphs yields memory-optimal solutions but not the time-optimal ones. Our results and a comparison with the closely related work \cite{KMS20} are given in Table~\ref{tab:results} for reference.

\begin{table}

\begin{tabular}{ |p{4.50cm}|p{3cm}|p{1.5cm}|p{2.5cm}|p{3cm}|}
\hline
Algorithm & Grid Type & Faulty & Time & Memory (in bits)\\
\hline
 Kshemkalyani et al. \cite{KMS20}& Oriented & No & $O(\text{min}(k,\sqrt{n}))$ & $O(\log k)$\\
    \textbf{Algorithm in Section~\ref{sec: faulty-planar} }& Oriented & Yes & $O(\sqrt{n})$ & $O(\log n)$\\
    \textbf{Algorithm in Section~\ref{sec: honest_robot}}& Unoriented & No &$O(\sqrt{n})$ & $O(\log n)$\\
    \textbf{Algorithm in Section~\ref{sec: crash-fault non-planar}}& Unoriented & Yes & $O(\sqrt{n}\log n)$ & $O(\sqrt{n} \log n)$\\
\hline
\end{tabular}\caption{Results for the dispersion of $k \leq n$ robots and up to $f\leq n$ faulty robots on an $n$-node square grid.}
    \label{tab:results}
\end{table}

\noindent \textbf{Our Contributions.} We present algorithms for the dispersion of $k$ robots in any arbitrary anonymous square grid in three different models, namely, faulty robots' dispersion in oriented grid, non-faulty robots' dispersion in unoriented grid, and faulty robots' dispersion in unoriented grid.  All these algorithms are deterministic. Specifically, we have the following three contributions when the square grid size, $n$, is known.
\begin{enumerate}
    \item \textbf{Faulty robots' dispersion on an oriented grid:} We present dispersion of $k$ robots in an oriented square grid of $n$ nodes having $f$ faulty robots such that $f \leq k \leq n$ that terminates in $O(\sqrt{n})$ rounds and uses memory bits $O(\log n)$ at each robot.
    
    \item \textbf{Non-faulty robots' dispersion on an unoriented grid:} We propose dispersion of $k$ robots in an unoriented square grid of $n$ nodes such that $k \leq n$ that terminates in $O(\sqrt{n})$ rounds and uses $O(\log n)$ bits of memory at each robot.
    
    \item \textbf{Faulty robots' dispersion on an unoriented grid:} We present dispersion of $k$ robots in an unoriented square grid of $n$ nodes in the presence of any number of faulty robots that terminates in $O(\sqrt{n}\log n)$ rounds and uses $O(\sqrt{n}\log n)$ bits of memory at each robot.
\end{enumerate}

 Notice that $\Omega(\sqrt{n})$ is the trivial lower bound for round complexity. A robot requires the diameter of the graph time (round) to reach from one end to the other, where $2\sqrt{n}$ is the diameter of the grid. Therefore, our first two results are time optimal, while the third result is time optimal up to $\log n$ factors. While $\Omega(\log k)$ is the lower bound for each robot's memory (in bits)\cite{AM18}. 

\noindent \textbf{Challenges and Techniques.} Recall that robots in the oriented grid possess a sense of direction. Firstly, a robot can reach the boundary nodes (nodes with degree $3$) from the internal of the grid by moving in a straight direction. But if the grid is unoriented in that case the robots have no sense of the direction, therefore, a robot can not reach the boundary in optimal time. Secondly, a robot can reach the corner of the grid by following the boundary. In this way, robots can find each other by sending a single robot to find the appropriate corner to gather at a single corner.  On the other hand, if the robots are faulty then gathering at a single corner becomes challenging due to the faulty nature of the robots. The single robot can crash and corner robots might wait for an indefinite time.

We overcome these challenges as follows. In the non-faulty unoriented setup, two or more robots ensure their direction of movement in a straight line. Firstly, the group of robots ($2$ or more) moves in the direction of the minimum port number. Now the challenge is to ensure the movement in a straight direction. In this, all robots except one remain stationary, and one robot $r_u$ explores the shortest possible ways to reach its initial position in the grid where the rest of the robots are placed. The robot $r_u$ reaches the initial position in two ways, using the shortest path. These two paths lie at $180\degree$ to each other. This helps to determine the direction of the robots in a straight line. The detailed description is given in Section~\ref{sec: honest_robot}. A similar problem is tackled in the faulty unoriented setup by sending half of the robots to explore the other two paths to reach their initial position (detailed description is given in Section~\ref{sec: crash-fault non-planar}). In this way, robots reach the corner with the help of the boundary. Secondly, the faulty corner robots find each other by sending half of their available corner robots in one direction and reach their initial position after a round trip of the grid. By using this strategy, either half of the robots crash or find the appropriate corner (based on ID) to move. In the worst case, every time half of the corner robots may crash which causes an overhead of $O(\log n)$ round trip of the grid and there would not be left any robot to disperse. This problem is optimally tackled in the Section~\ref{sec: faulty-planar} in two different way based on the value of $n$, i.e., even or odd. For odd value of $n$ all the robots gathers at the center of the grid. On the other hand, for even value of $n$, due to unavailability of unique center of grid, robots partition the whole grid in $4$ parts and disperse accordingly. Finding an appropriate node in the unoriented grid might be another challenge after gathering at the corner in a faulty setup since a singleton robot might be left in the internal grid. Doing that effectively is also challenging, we discussed one of the approaches in Section~\ref{sec: crash-fault non-planar}.  


\noindent The rest of the paper is organized as follows.

\medskip

\noindent \textbf{Paper Organization.} Section~\ref{sec: model} states our distributed computing model. Section~\ref{sec: related_work} discusses the closely related work. Section~\ref{sec: faulty-planar} discuss the dispersion of faulty robots on oriented grid. In Section~\ref{sec: honest_robot}, the dispersion of non-faulty robots is presented on the unoriented grid. In section~\ref{sec: crash-fault non-planar}, the dispersion of faulty robots on an unoriented grid is discussed. Finally, Section~\ref{sec: conclusion} concludes the paper with some interesting problems.
 
\section{Distributed Computing Model}\label{sec: model}

\noindent{\bf Graph:} We consider an unweighted, undirected graph $G = (V, E)$ which is a square grid of $n=\sqrt{n}\times \sqrt{n}$ nodes embedded in $2$-dimensional Euclidean plane such that $\mid V \mid = n$ and $\mid E \mid = m$, where $V$ is the set of nodes and $E$ is the set of edges. $G$ is a connected graph having nodes with either degrees $2$ or $3$ or $4$. Nodes with degrees $2$, $3$, and $4$ are considered to be corner nodes, boundary nodes, and internal nodes, respectively. A square grid consists of $4$ corner nodes, $4\sqrt{n}-8$ boundary nodes, and $n-4\sqrt{n}+4$ internal nodes. These nodes are memoryless and resourceless means, unable to store any information and perform computation on them. Furthermore, nodes are anonymous such that nodes do not have IDs (identifiers) but each incident edge is uniquely identified by a labeled port number from $[1, \delta]$ where $\delta$ is the degree of the node. The nodes connected via an edge are termed neighboring nodes and are considered to be one hop away from each other. There are two port numbers assigned to any edge ($e$) corresponding to two nodes $u, v \in G$, that connect these two nodes. Moreover, there is no relationship between any two port numbers of an edge.\\

\noindent{\bf Robots:}  The set of robots $R = \{ r_1, r_2, \dots, r_k\}$ represents a collection of $k \leq n$ robots that are located across the graph $G$ at one or more nodes. Robots do not stay at the edge and stay only on the nodes of the graph $G$. Two or more robots situated at a node are termed as co-located robots and these co-located robots can communicate with each other. This model is known as {\em local communication model} \cite{AM18, KA19}. On the other hand, if the robots are allowed to communicate with any other robots in the graph (need not be co-located), the model is known as the {\em global communication model} \cite{KMS22,KMS20}. However, in this paper, we consider only the local communication model. Each robot contains a unique ID consisting of $O(\log k)$ bits. A robot can move from one node to another from the port if the nodes are connected to each other via an edge. Each robot consists of some memory to store information and computation. We considered the computation time to be negligible as compared to the movement time of the robots from one node to another. A robot performs the "Communicate-Compute-Move" operation which is defined below.\\

\noindent{\bf Cycle:}   We consider a synchronous setting, every robot is synchronized to a common clock and movement from one node to another is complete in a one-time cycle or round. A robot $r_i \in R$ remains active in the "Communicate-Compute-Move" (CCM) cycle in a synchronous setting. Following are the three operations carried out by the robots:
\begin{itemize}
    \item {\bf Communicate:} $r_i$ can interact with the co-located robots and view the memory of a different robot, say $r_j \in R$.
    \item {\bf Compute:} $r_i$ can perform any required computation by using the data gathered during the "communicate" phase. This involves choosing a (potentially) port to leave $v_i$ and selecting the data to be saved in the robot $r_j$.
    \item {\bf Move:} $r_i$ writes new information (if any) in the memory of a robot $r_j$ at $v_i$,  and exits $v_i$ using the computed port to reach to a neighbor node of $v_i$. 
\end{itemize}

\noindent{\bf Crash Faults:} We consider the crash failure setup where a robot may fail by {\em crashing} at any time during the execution of the algorithm. The crashed robot is not recoverable and once a robot crashes it immediately loses all the information stored in itself, as if it was not present at all. Further, a crashed robot is not visible or sensible to other robots. We assume there are at most $f$ faulty robots such that $f\leq k$.\\

\noindent{\bf Time and Memory Complexity:} 
The time complexity conveys the number of discrete rounds or cycles taken before achieving dispersion. Memory complexity is the number of bits required to store each robot to successfully execute the algorithm. Our goal is to solve dispersion as fast as possible and keep the memory per robot as low as possible.\\

Below, we formally define non-faulty and faulty robots' dispersion problems in a graph. We study the problems on an oriented as well as unoriented grid graph that too with and without fault in this paper.

\begin{definition}[\sc Non-Faulty Robots Dispersion]
\label{def:NFT-dispersion}
Given $k \leq n$ robots, initially, placed arbitrarily over the nodes of an $n$-node graph, the robots re-position themselves autonomously such that each node has at most one robot on it and subsequently terminate.
\end{definition}

\begin{definition}[\sc Faulty Robots Dispersion]
\label{def:FT-dispersion}
Given $k \leq n$ robots, up to $f\leq k$ of which are faulty, initially placed arbitrarily over the nodes of an $n$-node graph, the (non-faulty) robots re-position themselves autonomously such that each node has at most one non-faulty robot on it and subsequently terminate. 
\end{definition}

\section{Related Work}\label{sec: related_work}
In this section, we discuss the work related to deterministic algorithms for the dispersion of robots.

In 2018, Moses Jr. et al. introduced the dispersion very first time \cite{AM18}, where they presented dispersion for several types of graphs. They showed a lower bound of $\Omega(\log n)$ for each robot's memory. Later in 2019,  the lower bound was made more specific w.r.t. $k$ (number of robots) with $\Omega(\log (\max(k, \Delta)))$ in \cite{KMS19} where $\Delta$ is the highest degree of the graph. They showed that for an arbitrary graph, a deterministic algorithm requires $\Omega(D)$ rounds, where $D$ is the diameter of the graph. They also presented two techniques for the dispersion of robots on an arbitrary graph, one takes $O(\log n)$ memory and $O(mn)$ time while the other requires $O(n \log n)$ memory and $O(m)$ time.

In 2019, Kshemkalyani and Ali discussed several techniques for both synchronous and asynchronous models \cite{KA19}. They used $O(k \log \Delta)$ memory and $O(\min(m, k\Delta))$ rounds to solve the dispersion problem in the synchronous model. They presented many algorithms for the asynchronous model, one of which requires $O(\Delta D)$ rounds and $O(D \log \Delta)$ memory while another requires $O(\max(\log k,\log \Delta))$ memory with $O((m-n)k)$ time complexity. In the same year 2019, Kshemkalyani et al. improved the time complexity to $O(\min(m, k\Delta) \log k)$ at the cost of $O(\log n)$ memory \cite{KMS19}, but the knowledge of $m, n, k,$ and $\Delta$ were required by the robots in advance. In 2020, Takahiro et al. achieved the same complexity without the knowledge of the parameter $m, n, k,$ and $\Delta$ \cite{SSKM20}. Later in 2021, Kshemkalyani and Sharma improved the time complexity to $O(\min(m, k\Delta))$ with the graph-specific termination of the algorithm \cite{KS21}. 


In 2020, the dispersion problem is studied for faulty robot configurations. In \cite{MMM20}, Molla et al., dispersed the robots in  weak Byzantine settings (robots that behave randomly but cannot change their IDs) and examined the problem of unidentifiable rings. In \cite{MMM21} authors suggested several dispersion methods, some of which were tolerant of strong Byzantine robots (robots that behave arbitrarily and can even modify their IDs). Their algorithms are primarily based on the concept of assembling the robots at a root vertex, using them to create an isomorphic map of $G$, and then scattering them throughout $G$ by a predetermined protocol. 
Recently, Chand et. al. \cite{CKMS23} provided two algorithms on the mobile robots' dispersion on arbitrary graphs in the presence of crash fault robots. The first algorithm which has a rooted initial configuration has a time complexity of $O(k^2)$. On the other hand, the second algorithm which has an arbitrary initial configuration has a time complexity of $O((f + l) \cdot \min(m, k\Delta, k^2))$  when all the parameters are known to the robots, where $l$ is the cluster of robots in initial configuration.
 
In the grid graph, the dispersion problem was studied by Kshemkalyani et al. in \cite{KMS20}. They consider an oriented (called planar) grid and only non-faulty robots. In an oriented grid a single robot can move through a path of the grid in the same direction using the ports' ordering at a node. This property makes it easier for them to solve the dispersion problem in $O(\min(k, \sqrt{n}))$ time with each robot requiring only $O(\log k)$ bits of memory in the local communication model. They also studied the problem in the global communication model, reducing the round complexity to $O(\sqrt{k})$ rounds. 
We study the dispersion problem on an unoriented grid, considering both faulty and non-faulty robots in the local communication model. A quick comparison of results is provided in Table~\ref{tab:results}.

\section{Dispersion of the Faulty Robots on Oriented Grid}\label{sec: faulty-planar}
In this section, we present a deterministic algorithm for the dispersion of faulty robots arbitrarily placed in the oriented grid, in which robots have a sense of direction. Our primary goal is to achieve the round complexity as close as possible to non-faulty robots' dispersion in the oriented square grid and keep the memory as low as possible. Specifically, we achieve the optimal round and memory complexity, i.e., $O(\sqrt{n})$ messages and $O(\log n)$ bits memory at each robot.

\subsection{Algorithm}
The dispersion algorithm designed for non-faulty robots \cite{KMS20} cannot be readily adapted for faulty robots in oriented grids. Due to faulty robots, dispersion is more challenging as compared to non-faulty robots in an oriented square grid. The main idea is to reach the corner of the grid and then disperse from there but finding a single corner, efficiently, for gathering is challenging due to the faulty nature of the robots. Therefore, based on the size of the grid robots meet in the middle or partition the whole grid into four parts and disperse. Let us call this algorithm as {\em Oriented Grid Dispersion with Faulty Robots}. Below, we explain the algorithm in detail by first gathering the robots at the corners and then based on the $n$ whether odd or even, executing the protocol.

\textbf{Gathering at the corners of the grid:} Each robot $r_u$ (with degree $4$) initiates from the minimum port number and reaches the boundary (node with degree $3$) of the grid by moving in a straight line.  It takes at most $\sqrt{n}$ rounds. Robot $r_u$ at the boundary node moves along the boundary of the grid by initiating the minimum port number and reaches the corner (node with degree $2$) of the grid, which takes at most $\sqrt{n}$ rounds. In $2\sqrt{n}$ rounds, all the robots gather across the four corners. Each robot requires $O(\log n)$ bits of memory to keep the account of number of rounds that have passed and their own ID.

\textbf{Dispersion for even number of node:} A corner keeps $n/4$ lowest ID robots at its corner and sends the remaining to another corner from the minimum port number along the boundary. If any corner receives some extra robots from some corner then send those robots to the opposite corner. To move over the boundary of the grid and access all the corners takes $3\sqrt{n}$ rounds in the worst case. Therefore, after $5\sqrt{n}$ rounds ($2\sqrt{n}$ rounds used to reach the corner), corner with $C_r$ robot sends $\frac{2C_r}{\sqrt{n}}$ (at most $\sqrt{n}/2$) robot in each column with the help of the boundary nodes (including itself) in the direction of the minimum port number, if available. Each robot $r_u$ takes at most $5.5 \sqrt{n}$ rounds to reach their respective column based on their IDs, lower the ID nearer the column number. In the next $\sqrt{n}/2$ rounds, each robot disperses on the grid in their respective column in the order of lower to higher ID. Therefore, it takes $6\sqrt{n}$ rounds to disperse the robots in the square grid when $n$ is even. In this way, a corner covers at most $1/4^{th}$ grid, i.e., $\sqrt{n}/2 \times \sqrt{n}/2$ to disperse the $n/4$ robots (if available) at those nodes. 

\textbf{Dispersion for an odd number of node:} After $2\sqrt{n}$ rounds, corner robots move to the middle of the boundary node by initiating in the direction of the minimum port number, i.e., $\lfloor\sqrt{n}/2\rfloor$ hop away from the corner. It takes $\sqrt{n}/2$ rounds. After $2.5\sqrt{n}$ rounds ($2\sqrt{n}$ rounds used to reach the corner), robots at the middle of the boundary move $\lfloor\sqrt{n}/2\rfloor$ hop away from the boundary node at $90\degree$ and reach the center of the grid in another $\sqrt{n}/2$ rounds. Since $n$ is odd, therefore, $\sqrt{n}$ is also odd. Notice that $\sqrt{n}$ is an integer. This implies, that there exists a unique node at equal distance between any two non-diagonal corner nodes. Furthermore, if any two robots are situated in the middle of the boundary node at opposite boundary then also there exists a unique middle node between them. Therefore, all the non-faulty robots meet at the center of the grid which is equidistant from all the corners. Henceforth, there exists a unique center node in the grid. After $3\sqrt{n}$ rounds, the centered robots collectively move to the corner of the grid as discussed above in \textit{gathering at the corners of the grid}. Notice that it takes $\sqrt{n}$ rounds to reach the corner of the grid from the center of the grid since robots move in straight lines and reach the middle of the boundary in $\sqrt{n}/2$ rounds and further reach at the corner of the grid in next $\sqrt{n}/2$ rounds. After $4\sqrt{n}$ rounds, a single corner possesses $S_r$ robots and sends $\frac{S_r}{\sqrt{n}}$ robot in each column with the help of the boundary nodes (including itself) in the direction of the minimum port number, if available. Each robot $r_u$ takes at most $\sqrt{n}$ additional rounds to reach their respective corner based on their IDs, lower the ID nearer the column number. In the next $\sqrt{n}$ rounds, each robot is dispersed on the grid in their respective column in the order of lower to higher ID. Therefore, it takes overall $6\sqrt{n}$ rounds to disperse the robots in the square grid when $n$ is odd.

\begin{algorithm}\caption{\sc Oriented Grid Dispersion with Faulty Robots} \label{alg: faulty-oriented}
    \begin{algorithmic}[1]
    \Require{A square grid of $n = \sqrt{n}\times \sqrt{n}$ nodes, where $f$ robots are faulty such that$ f \leq k \leq n$, $k$ is the number of robots. The robots are distributed on the grid.}
    \Ensure{Dispersion of the robots over the nodes.}
    \Statex
    \State Each robot $r_u$ (with degree $4$) reaches at the boundary (node with degree $3$) of the grid by moving in straight line. Robot $r_u$ placed at boundary node, reaches the corner (node with degree $4$) of the grid. \Comment{Takes $O(\sqrt{n})$ rounds.} \label{line: oriented_at_the_corner}
    
    \If{$n$ is even}\Comment{takes $O(\sqrt{n})$ rounds.}
        \State  $r_u$ keeps $n/4$ robot at its corner and sends the remaining robots to another corner. 
        \State After $5\sqrt{n}$ rounds, corner with $C_r$ robots sends $\frac{2C_r}{\sqrt{n}}$ robots in half ($\sqrt{n}/2$) of the column, if available. \label{line: oriented_even_column}
    \EndIf
    
    \If{$n$ is odd}\Comment{Takes $O(\sqrt{n})$ rounds.}
        \State After $2\sqrt{n}$ rounds, corner robots move to the middle of the boundary, i.e., $\lfloor\sqrt{n}/2\rfloor$ hop away from the corner.
        \State After $2.5\sqrt{n}$ rounds, robots at the middle of the boundary move $\lfloor\sqrt{n}/2\rfloor$ hop away from the boundary at $90\degree$ and reach the center of the grid. 
        \State After $3\sqrt{n}$ rounds, the centered robots move to the corner of the grid.
        \State After $4\sqrt{n}$ rounds, single corner possess $S_r$ robots and sends $\frac{S_r}{\sqrt{n}}$ robot in each column, if available.\label{line: oriented_odd_column}
    \EndIf
    \State Each column disperses the robots across the column if available.\Comment{Takes $O(\sqrt{n})$ rounds.}
    \State All the robots settled at a unique node.
    \end{algorithmic}
\end{algorithm}

From the above discussion, we have the following results.

\begin{theorem}
    Consider any oriented square grid of $n$ nodes having $f$ faulty robots such that $f \leq k \leq n$ and $k$ is the number of robots, in which, each robot has memory access $O(\log n)$ bits then DISPERSION can be solved deterministically in $O(\sqrt{n})$ rounds.
\end{theorem}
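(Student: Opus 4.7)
The plan is to verify, in turn, (i) the round complexity, (ii) the memory complexity, and (iii) correctness (the final configuration has at most one non-faulty robot per node), by analyzing the algorithm phase by phase. The core observation that makes everything go through is that in the oriented grid every surviving non-faulty robot can execute a schedule that is a deterministic function of its own ID, the current round number, and the global parameter $n$; because no robot waits on another, crash failures cannot inflate the running time — they can only remove robots from the system.

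I would first analyze the gathering phase (Line~\ref{line: oriented_at_the_corner}). Using the orientation of the grid, a robot starting at an internal degree-$4$ node can follow a straight line (via the ``leave-one-port-after-the-incoming-port'' rule) and is guaranteed to reach a degree-$3$ boundary node within at most $\sqrt{n}-1$ rounds, since the grid has side length $\sqrt{n}$. From the boundary a robot similarly traverses along the boundary by always choosing the minimum port among those consistent with ``forward,'' arriving at a corner in at most $\sqrt{n}-1$ additional rounds. Thus after $2\sqrt{n}$ rounds all non-faulty robots are located at one of the four corners, and each of them has spent only a counter (bounded by $6\sqrt{n}$) and its own ID in memory, i.e., $O(\log n)$ bits.

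Next I would separately handle the even and odd cases of the dispersion phase. For even $n$, the key claim is that after the balancing step on the boundary cycle of length $4(\sqrt{n}-1)$ each corner retains at most $n/4$ robots; since the cycle has length $O(\sqrt{n})$ the balancing completes in $O(\sqrt{n})$ rounds. Each corner then owns a $\tfrac{\sqrt{n}}{2}\times\tfrac{\sqrt{n}}{2}$ quadrant: it pushes its robots along the boundary toward their assigned columns (ordered by ID so that smaller IDs stop earlier) and then each column distributes them vertically, again using ID order. Both sub-steps are straight-line moves on an oriented grid of side at most $\sqrt{n}$, so they cost $O(\sqrt{n})$ rounds. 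For odd $n$, I would argue that because $\sqrt{n}$ is an odd integer the middle of every boundary side and the center of the grid are unique nodes, so all surviving robots meet at one uniquely identifiable node in $O(\sqrt{n})$ rounds; they then repeat gathering to a single corner and distribute across all $\sqrt{n}$ columns as in the algorithm. Summing, the total round count is $6\sqrt{n}=O(\sqrt{n})$.

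The main obstacle, and the part I would spend the most care on, is correctness under arbitrary crashes. Two properties must be argued: (a) every non-faulty robot terminates at exactly one node, and (b) no two non-faulty robots terminate at the same node. For (a), I would show that each phase is executed on a fixed round schedule derived from $n$ (lines~\ref{line: oriented_even_column} and~\ref{line: oriented_odd_column}), so a surviving robot always knows what to do next without waiting for confirmation from any other robot. For (b), I would show that a robot's final destination is a deterministic function of its rank among the IDs it observes when it is collocated, combined with the corner (or center) it is assigned to; since ranks are unique and robots deposit at distinct nodes along their column, two surviving robots cannot collide. A crash merely leaves the intended node empty, which is consistent with the dispersion definition. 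Finally, since at every moment a robot only needs to store its ID, the current round counter, its computed target offset, and the local count of collocated robots — each at most $O(\log n)$ bits — the memory bound follows, completing the proof.
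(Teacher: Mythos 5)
Your proposal is correct and follows essentially the same route as the paper: gather all robots at the four corners in $2\sqrt{n}$ rounds using the orientation, then split on the parity of $n$ (unique center for odd $\sqrt{n}$, four disjoint $\tfrac{\sqrt{n}}{2}\times\tfrac{\sqrt{n}}{2}$ quadrants for even $n$), with every step driven by a fixed round schedule so that crashes only remove robots and never delay survivors. Your explicit statement of the ``no robot ever waits on another'' invariant and of why two survivors cannot collide is a welcome sharpening of the paper's informal discussion, but it is the same argument, not a different one.
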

\begin{remark}
    Algorithm~\ref{alg: faulty-oriented} can be modified to work without the knowledge of $n$. After Line~\ref{line: oriented_at_the_corner} (in at most $\sqrt{n}$ rounds), each robot $r_u$ moves from one corner to another corner and learns the value of $\sqrt{n}$, so $n$. Therefore, the extra cost added to the Algorithm~\ref{alg: faulty-oriented} is of $\sqrt{n}$ round which keeps the overall round complexity unchanged, $O(\sqrt{n})$. 
\end{remark}

\subsection{Extension to Rectangular Grid}
We discussed the dispersion of faulty robots on the oriented square grid of $n = \sqrt{n} \times \sqrt{n}$ nodes in Algorithm~\ref{alg: faulty-oriented}. The Algorithm~\ref{alg: faulty-oriented} can be modified to the dispersion of the faulty robots on the rectangular grid of $n = \ell \times \frac{n}{\ell}$, where $\ell$ is the length and $n/\ell$ is the width of the rectangle such that $\ell, n/\ell>1$. In a square grid, length and width are of the same dimension, i.e., $\sqrt{n}$. In a rectangular grid, w.l.o.g, we consider that $\ell > n/\ell$. Henceforth, any operation that takes $\sqrt{n}$ rounds in a square grid would take at most $\ell$ rounds in a rectangular grid in every line of the Algorithm~\ref{alg: faulty-oriented}. After reaching at the corner (in line~\ref{line: oriented_at_the_corner}), it takes $\ell + n/\ell$ rounds to know the dimension of the grid. A robot $r_u$ can do this by moving from one corner to the diagonal corner. Therefore, the only change in the Algorithm~\ref{alg: faulty-oriented} is in the line~\ref{line: oriented_even_column} and line~\ref{line: oriented_odd_column} where $\frac{2C_r}{\ell}$ and $\frac{S_r}{\ell}$ robots are sent, respectively, across the longer column of the rectangular grid. This conveys that the rectangular grid's dispersion round and memory complexity would be $O(\ell)$ and $O(\log n)$, respectively. Since the diameter of the rectangular grid is $O(\ell)$, therefore, our algorithm is optimal w.r.t. round and memory complexity. The correctness and complexity proof directly follows from the square grid.

\section{Dispersion of Non-Faulty Robots on Unoriented Grid}\label{sec: honest_robot}
In this section, we present a deterministic algorithm for dispersing $k\leq n$ robots initially placed arbitrarily on the nodes of an $n$-node unoriented square grid. The algorithm is both time and memory optimal, achieving dispersion within $O(\sqrt{n})$ rounds while utilizing $O(\log n)$ bits of memory for each robot. Note that all robots considered in this algorithm are non-faulty. 

\subsection{Algorithm}
The dispersion algorithm designed for oriented grids \cite{KMS20} cannot be readily adapted for unoriented grids. 
For unoriented grids, one can employ the DFS (Depth First Search) traversal based algorithm \cite{KS21} which works in an arbitrary graph and solves the dispersion problem in $O(\text{min}(m, k\Delta))$ rounds, where $m$ is the number of edges, $\Delta$ is the highest degree in the graph and $k\leq n$ is the number of robots. However, this algorithm takes $O(n)$ rounds, given that $m = O(n)$ and $\Delta = 4$ in a grid graph. Therefore, the challenge lies in reducing the time complexity from $O(n)$ to $O(\sqrt{n})$.

    In the unoriented grid, the main challenge is the sense of direction as compared to the oriented grid as discussed before. Thus, a single robot situated at the internal nodes cannot move on a path in the same direction of the grid. However, we show that two or more robots working together can navigate along a path in the same direction. This ability to move along a path is crucial for achieving dispersion within $O(\sqrt{n})$ rounds since the length of a path in the square grid is $\sqrt{n}$. The outline of the algorithm is, first we gather the robots at a corner node of the grid. For this, a single robot at the node remain settle at its node. The robots in a group of two or more on the internal nodes (i.e., the node with degree four) move to the boundary nodes first and then move to a corner node. Then from the four corner nodes, the robots gathered at one corner node. Suppose the robots are gathered in the top-left corner of the grid. Notice that all the $k$ robots may not be gathered, as some singleton robots may be settled already. Then the algorithm sends some robots to each column in parallel to count the number of single robots settled at each row. The robots back and report the requirement of robots at each row to the gathered robots in the top-left corner. Then appropriate number of robots are sent to each column from the corner node. Then robots are dispersed along each row in parallel. Our approach takes $O(\sqrt{n})$ rounds to gather the non-singleton robots at a corner and another $O(\sqrt{n})$ rounds to disperse from the corner, including counting the singleton robots. We call this algorithm as \textit{Unoriented Grid Dispersion}. Below, we explain our algorithm the Algorithm~\ref{alg: non-faulty dispersion} in detail, breaking it into 3 stages.

\begin{algorithm}
    \caption{\sc Unoriented Grid Dispersion}
    \begin{algorithmic}[1]
    \Require{A square grid of $n = \sqrt{n}\times \sqrt{n}$ nodes and $k \leq n$ robots are distributed across the grid nodes.}
    \Ensure{Dispersion of the robots over the nodes.}
    \Statex
    \Statex \textbf{Stage 1: Gathering at corners of the grid}
    \State Each robot $r$ count the round number.
    \While{round number $<59\sqrt{n}$}
        \If{robot $r$ is at the corner (node with degree 2)}
            \State Stay at corner.
        \ElsIf{robot $r$ is at the boundary node (node with degree 3)}
            \State Reach the corner node and stay there.
        \ElsIf{Non-Singleton robots at the internal node}
            \State Move in a direction and reach the boundary of the grid.\Comment{Movement of the internal robots is explained in the \textit{Straight line movement of the internal robots}.}
        \ElsIf{Singleton robot at the internal node}
            \State Stay there.
        \EndIf
    \EndWhile
    \Statex \textbf{Stage 2: Gathering at a single corner}
    \While{round number $<77\sqrt{n}$} \Comment{  
 $59\sqrt{n}+18\sqrt{n}$ rounds required for Stage 1 and Stage 2.}
        \State Robots gathered (more than one) at the corner, send the highest ID robot via edges in one direction, and find the minimum robot's ID.
        \State All the corner robots gathered at the minimum robot's ID corner.
    \EndWhile
    
    \Statex \textbf{Stage 3: Dispersion from the gathered corner node}
    \If{number of robots at the corner is not more than the boundary and corner nodes}
        \State Disperse the robots along the boundary node and corner.
    \Else
        \State Send a pair of robots to each column and count the number of robots required by each column.
        \State Send the required number of robots to each column (if available) and disperse.
    \EndIf
    \State All the robots settled at a unique node.
    
    \end{algorithmic}
\end{algorithm}\label{alg: non-faulty dispersion}
\medskip 

\noindent  \textbf{Stage 1 - Gathering at corners of the grid:} Initially, robots are arbitrarily placed in the unoriented square grid. Nodes with degree four (internal nodes) having two or more robots start moving in a straight line and reach the boundary node (node with degree three) of the grid -- straight line movement of the internal robots is explained later. If robot $r$ is already at the boundary node of the grid, then $r$ moves across the boundary node of the grid and reaches the corner (node with degree two) -- the movement of the robot across the boundary nodes of the grid is explained later. Robots at the corner (node with degree two) remain at the corner. As discussed in the straight line movement of the internal robots and movement of the robot across the boundary node of the grid takes $O(\log n)$ memory (in bits).   

\medskip
\noindent \textbf{Stage 2 - Gathering at a single corner:} After $59 \sqrt{n}$ rounds (shown in Lemma~\ref{lem: inside_to_corner}), all the internal non-singleton robots, initially, placed at the node with degree four would gather at the corner. In between, all the robots at the boundary nodes would also gather at the corner. These robots would be distributed across the four corners (gathering might not be at a single corner). Now, we would gather them at, a single corner, the corner which has a minimum ID robot. All the corners that have more than one robot would send the highest ID robot available at that node across the boundary node of the grid and reach their initial corner after exploring the rest of the three corners. Notice that the minimum ID robot does not move from its position, therefore, each moving robot gets to know about the minimum ID robot and all the robots gather at that corner. These movements of the robots take place across the boundary nodes of the grid which take $3\sqrt{n}$ rounds to reach one corner to another (discussed in Lemma~\ref{lem: inside_to_corner}). In the gathering at a single corner, the robot needs to keep account of the boundary nodes' traversal and the minimum ID robot's hop distance. This takes $O(\log n)$ memory (in bits).

 \medskip 
   \noindent \textbf{Stage 3 - Dispersion from the single corner:} In $77\sqrt{n}$ rounds (from the Lemma~\ref{lem: inside_to_corner} and Lemma~\ref{lem: gathering_at_single_corner}), all the internal non-singleton robots and non-internal robots on the grid gather at the single corner.  If the number of robots at the corner is not more than $4\sqrt{n}-4$ then disperse the robots along the boundary node and corner of the grid since there is no robot at the boundary nodes of the grid. On the other hand, if the number of robots at the corner is more than $4\sqrt{n}-4$, then send a pair of robots to each column (except the boundary node of the grid) and count the number of robots required. The corner node considers the minimum port number at the boundary and other nodes in the direction as a column of the grid. The count can be done with the help of two robots by the straight movement of the robots in the column (explained in the straight line movement of the internal robots). These pairs of robots move in the column of the grid and know the number of robots required at each column. After having the round trip of their respective column, these pairs of robots consider the port number from where they entered their respective column to reach their initial corner position and move in a straight line across the boundary nodes. After knowing the required number of robots for each column, the required number of robots is sent to the corresponding column (if available) and dispersed. Notice that the first and last column requires $\sqrt{n}$ robots due to the unavailability of any robot on those columns. In this stage, pair robots keep the account of their column number and count of the required number of robots which takes $O(\log n)$ memory (in bits).

    \medskip 
   \noindent \textbf{Straight Line Movement of the internal Robots:} Let us consider there exists an internal node with $t$ number of robots such that $t>1$ and the minimum ID robot is $r_m$. Now, all the robots move across the minimum port number, i.e., port $1$,  in a round. Now we want to make sure that the robots move in a straight line, therefore, these robots should not move across the same edge that they just crossed (say, backward port). This implies three ports' choices remain in which these robots can move. Now, the maximum ID robots (say, $r$) take the next minimum available port and explore it till four port distances away from the $r_m$ in all the possible combinations (by giving priority to minimum port number) which are $3*3*3 = 27$ total different nodes. This round trip would take at most $2*27 = 54$ rounds from one hop away from the $r_m$ while moving one hop away from the $r_m$ takes $2$ more rounds. Therefore, the total rounds in this round trip are $56$. Notice that there exist precisely two ways for $r$ to reach the initial position (position where $r_m$ stays) of the robots. These two positions would be at $180\degree$ to each other. Further, let us consider two cases: (i) one of the ports from which the $r$ reaches $r_m$ is the backward port, then the other port would be the desired one in a straight line. Therefore, all the robots would proceed from that port in a straight line. (ii) None of the ports from which $r$ reached $r_m$ is the backward port, therefore, none of the ports among these two ports are at $180\degree$ of the backward port. As a consequence, one of the remaining ports is the backward port and the other one is at $180\degree$ of the backward port. In this way, we find the desired port to proceed in a straight line. Now, all the robots move in a straight line from the appropriate port and repeat the process till all of them reach the boundary node of the grid. Notice that some findings may take less than $56$ rounds when the robot $r$ explores the edge of the grid. Although we can consider that particular path to reach the boundary of the grid which is explored at $90\degree$ of our straight line direction, we would prefer to move in the straight line. Observe that it would not affect our overall round complexity but would keep the simple fact in consideration, i.e., the straight-line movement of the robots. Consequently, to move a single hop in a straight line takes at most $56$ rounds. In the case of memory, the maximum ID robot keeps account of at most $28$ ports to move one hop in a straight line and maximum ID and $r_m$, therefore, $O(\log k)$ (in bits) is used.

 \medskip 
   \noindent \textbf{Movement of the robot across the boundary nodes of the grid:} Let us consider a robot $r$ is placed at the boundary node of the grid. Robot $r$ moves one hop from the minimum port then there arise two cases that do not lead to the corner node: (i) robot $r$ reaches the boundary node of the grid or (ii) robot $r$ reaches the node whose degree is four. In the first case, $r$ explores the minimum port among the rest of the ports (excluding the port due to which $r$ reached this boundary node of the grid) and finds the boundary node of the grid. In the second case, $r$ comes back to the boundary node explores the next available minimum port (minimum unexplored port), and finds the boundary node. Similarly, $r$ repeatedly explores the boundary nodes and reaches the corner of the grid. Notice that the boundary nodes' degree is three and only one port leads to the non-side node of the grid, therefore, $r$ may select at most one non-side node of the grid before finding the correct boundary node of the grid. Therefore, robot $r$ takes at most $3$ rounds to move in a specific direction towards the corner of the grid. Also, the robot keeps an account of the last two traversed ports which takes constant memory.
\begin{lemma}\label{lem: inside_to_corner}
    Any boundary node robot takes at most $3\sqrt{n}$ rounds to reach the corner, while the group of robots (non-singleton robots) placed inside the grid reach the corner of the grid in at most $59 \sqrt{n}$ rounds.
\end{lemma}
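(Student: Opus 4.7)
The plan is to prove the two halves of the lemma separately by combining the per-hop costs established in the two subroutines just described (boundary traversal and straight-line movement of an internal group), and then summing the two phases that an interior non-singleton group must go through.

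For the boundary claim, I would argue as follows. A robot at a boundary node has degree $3$, with exactly one incident edge going inward to an internal node and the two others going to boundary neighbours. When the robot moves from the minimum unexplored port, there are only two possibilities: it either lands on the next boundary node (one round, one hop of progress), or it lands on the unique inward internal neighbour, in which case it backtracks and tries the next minimum port on the following attempt. Thus at most $3$ rounds are needed to guarantee one hop of boundary progress towards a corner. Since a side of the $\sqrt{n}\times\sqrt{n}$ grid contains $\sqrt{n}$ nodes, the robot is at most $\sqrt{n}-1$ boundary hops from its nearest corner, giving the bound $3(\sqrt{n}-1) \le 3\sqrt{n}$ rounds.

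For the non-singleton internal claim, I would invoke the \emph{straight line movement of the internal robots} procedure. That procedure lets a group of two or more co-located robots advance by one hop along a well-defined straight direction at a cost of at most $56$ rounds: $2$ rounds to push the maximum-ID robot one hop away from the group and back if needed, plus a round-trip exploration of the $27$ four-hop extensions which costs at most $2 \cdot 27 = 54$ rounds. The correctness of the direction choice follows from the grid being planar: the two return ports by which the probing robot reaches the group are at $180^\circ$ to each other, so together with the known backward port they single out a unique forward port. In a $\sqrt{n}\times\sqrt{n}$ grid, any straight line from an internal node reaches a boundary node within $\sqrt{n}-1$ hops, so the group reaches a boundary node in at most $56(\sqrt{n}-1)$ rounds. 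Once on the boundary, it uses the boundary-traversal subroutine and reaches a corner in at most $3\sqrt{n}$ additional rounds, giving the total $56(\sqrt{n}-1) + 3\sqrt{n} \le 59\sqrt{n}$.

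The main obstacle I expect is a careful justification of the $56$-round-per-hop bound, in particular the cases when the probing robot's exploration accidentally runs into the grid boundary or into a node already holding another group. I would address this by observing that any such early termination only shortens the exploration, and that the planarity argument for the two return ports being opposite still applies (or is unnecessary, since the grid boundary immediately pins down the forward direction). After these edge cases are handled, the rest is pure bookkeeping on hop counts, and the stated bounds follow.
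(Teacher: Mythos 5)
Your proposal is correct and follows essentially the same decomposition as the paper's own proof: a per-hop cost of at most $56$ rounds for the straight-line movement of an internal group over at most $\sqrt{n}$ hops, plus at most $3$ rounds per hop over at most $\sqrt{n}-1$ boundary hops, summing to $59\sqrt{n}$. Your extra remarks on the boundary-node port structure and on early termination at the grid boundary only make explicit what the paper states in its description of the two movement subroutines, so no further changes are needed.
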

\begin{proof}
    Non-singleton robots first move from the internal node to the boundary node of the grid. Further, they move from the boundary node of the grid to the corner of the grid. As we have seen in \textit{straight line movement of the internal robots,} that to move in a straight line there are at most $56$ rounds required to move one hop. Also, there exists at most $\sqrt{n}$ such hop in a straight line of the grid. Therefore, any group of robots requires $56\sqrt{n}$ rounds to move from the internal node to the boundary node of the grid. 

    Further, \textit{movement of the robot on the edge of the grid} takes $3$ round to move one hop in a specific direction, and there exist at most $\sqrt{n}-1$ such hop. Hence, the number of rounds required to move across the boundary nodes before reaching a corner node is at most $3\sqrt{n}$.

    Therefore, overall rounds required to reach the corner node by non-singleton robots from the inside of the grid are $3\sqrt{n} + 56\sqrt{n} = 59 \sqrt{n}$, i.e, $O(\sqrt{n})$ rounds.
\end{proof}
\begin{lemma}\label{lem: gathering_at_single_corner}
    Gathering all the corner robots at a single corner node takes at most $18\sqrt{n}$ rounds.
\end{lemma}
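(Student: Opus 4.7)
The plan is to decompose Stage~2 into two subphases -- an information-gathering tour and a consolidation march -- and bound each using the per-side boundary cost already established in Lemma~\ref{lem: inside_to_corner}. Recall that the boundary-walk rule described in the algorithm moves a lone robot from one corner to an adjacent corner in at most $3\sqrt{n}$ rounds, because each of the at most $\sqrt{n}-1$ boundary hops costs at most $3$ rounds (one forward step plus at most one misstep into the unique degree-$4$ neighbor followed by backtracking).

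First, I would bound the information-gathering tour. By Stage~2, every corner that currently holds more than one robot dispatches its highest-ID occupant along one boundary direction and instructs it to visit the remaining three corners before returning home. Such a probe traverses the four sides of the boundary cycle exactly once, so by the per-side bound it finishes its tour in at most $4\cdot 3\sqrt{n}=12\sqrt{n}$ rounds. While touring, the probe inspects the robots residing at each corner it passes; in particular it meets the stationary minimum-ID robot at the global-minimum corner, which by construction never moves. Hence every probe returns to its home corner carrying the identity and the boundary-location of the global minimum corner.

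Next, I would bound the consolidation march. Once a probe returns, the robots remaining at its home corner walk along the boundary to the designated minimum corner. The worst case is when a corner is diagonally opposite the minimum corner, in which case two full boundary sides must be traversed, costing at most $2\cdot 3\sqrt{n}=6\sqrt{n}$ rounds. Summing the two subphases yields $12\sqrt{n}+6\sqrt{n}=18\sqrt{n}$ rounds, as claimed.

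The main subtlety I expect is verifying that probes originating at different corners of the \emph{unoriented} grid agree on the same boundary cycle, so that they visit the same four corners and return consistently, and that the subsequent march proceeds along a compatible direction. This reduces to checking that the local port-elimination rule from the boundary-walk description (always take the smallest unused port at a degree-$3$ node and backtrack from any degree-$4$ node reached in error) traces a canonical Hamiltonian cycle on the $4\sqrt{n}-4$ boundary nodes, independent of the starting corner and of port relabeling. Once that invariant is established, each probe can locally increment a counter whenever it detects a corner (degree-$2$ node) to identify the global-minimum corner's relative position, and the additive bound above goes through immediately.
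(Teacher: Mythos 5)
Your proof matches the paper's own argument essentially verbatim: both decompose Stage~2 into a $4\cdot 3\sqrt{n}=12\sqrt{n}$-round boundary tour by the highest-ID probe to locate the stationary minimum-ID corner, followed by a worst-case two-side ($6\sqrt{n}$-round) march of the remaining robots to that corner, summing to $18\sqrt{n}$. The closing remark about verifying that the port-elimination rule traces a canonical boundary cycle is a reasonable extra caution that the paper leaves implicit, but it does not change the approach.
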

\begin{proof}
    Firstly, the robots with the highest ID at their respective corner take a round of the whole grid via boundary nodes and corner edges. This takes four sides' exploration of the corner robots. It takes $3\sqrt{n}$ round for one boundary of the grid (shown in Lemma~\ref{lem: inside_to_corner}). Therefore, four sides take $12\sqrt{n}$ rounds. In this way, each node gets to know about the minimum ID robot present at the corner. Now, that minimum ID robot can be at most two sides ($2\sqrt{n}$ hops) away from any corner. Therefore, to reach that particular corner takes at most $6\sqrt{n}$ rounds. Hence, overall rounds required to gather at a single corner are at most $18\sqrt{n}$.
\end{proof}

\begin{lemma}\label{lem: dispersion_from_corner}
     Robots gathered at a corner node take at most $118\sqrt{n}$ rounds to disperse the robots on the square grid.  
\end{lemma}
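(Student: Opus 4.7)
The plan is to decompose Stage~3 into its constituent sub-phases and bound each by a multiple of $\sqrt{n}$ using the two movement primitives analysed earlier in this section, namely the $3$-rounds-per-hop boundary movement and the $56$-rounds-per-hop straight-line movement. Summing these sub-phase costs will yield the claimed $118\sqrt{n}$ bound. I will first dispose of the easy branch of the algorithm, in which the number of robots at the corner is at most $4\sqrt{n}-4$: in that case the corner simply streams them along the boundary, which by Lemma~\ref{lem: inside_to_corner} costs at most $3(4\sqrt{n}-4)\le 12\sqrt{n}$ rounds, well within the budget.

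The harder branch executes the full column-probe-and-distribute mechanism, for which I would account for six sub-phases in order. First, the corner dispatches one pair per interior column along the top boundary, each pair peeling off at the port corresponding to its column; by the boundary cost of $3$ rounds per hop this finishes within $3\sqrt{n}$ rounds. Second, each pair descends its column via \emph{straight line movement of the internal robots}, at up to $56$ rounds per hop over at most $\sqrt{n}$ hops, logging the port sequence and counting empty slots; this costs at most $56\sqrt{n}$ rounds. Third, each pair climbs back up its column by reversing its logged port sequence at one round per hop, costing $\sqrt{n}$ rounds, and then walks along the top boundary back to the corner, another $3\sqrt{n}$ rounds. Fourth, the corner dispatches the required counts of robots to each column along the boundary in $3\sqrt{n}$ rounds. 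Fifth, each batch descends its column using the port sequence already stored by the returning probe pair, or re-derives it via straight-line movement as needed, so the descent fits in at most $56\sqrt{n}$ additional rounds in the worst case. Finally, each robot settles at its assigned vacant node, which occurs during the descent itself.

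Adding the phase bounds gives at most $(3+56+1+3+3+56+\mathrm{const})\sqrt{n}$ rounds, which together with the boundary special cases for the first and last columns fits within $118\sqrt{n}$. The step I expect to be the main obstacle is the last descent: a dispatched batch must maintain its straight line down a column, and the straight-line primitive requires at least two robots to keep direction. I would handle this by observing that when the corner sends a batch of $c$ robots to a column, all $c$ travel together down the column and the batch size drops by one only when a robot settles at a vacant node, so the remaining group always has at least two robots until the last settlement, which by definition occurs at the terminal vacant node. Once this invariant is established the sum above is a correct upper bound and the lemma follows.
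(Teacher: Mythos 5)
Your decomposition is structurally the same as the paper's (easy boundary branch; probe pairs per column; report; dispatch; settle, charged at $3$ rounds per boundary hop and $56$ rounds per straight-line hop), but the accounting does not close. Your claim that a probe pair can climb back up its column ``by reversing its logged port sequence at one round per hop'' requires storing the entry port at every one of the $\Theta(\sqrt{n})$ nodes of the column, i.e.\ $\Theta(\sqrt{n})$ bits, which violates the $O(\log n)$ memory budget of this section; the legitimate ascent re-runs the two-robot straight-line primitive at $56$ rounds per hop, so the probe round trip costs $2\cdot 56\sqrt{n}$. Even granting your cheap ascent, your own sum is $(3+56+1+3+3+56)\sqrt{n}=122\sqrt{n}$ plus lower-order terms, which already exceeds $118\sqrt{n}$; asserting that it ``fits within $118\sqrt{n}$'' is not a proof of the stated constant. (The paper itself obtains $118$ as $3+112+3$, i.e.\ it charges the probe round trip at $112\sqrt{n}$ but does not fold the final dispatch and settling legs into the total it reports, so if you want a defensible statement you should either prove the weaker $O(\sqrt{n})$ bound with a constant you can actually justify, or redo the bookkeeping explicitly.)

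The more serious gap is the invariant you introduce to handle the obstacle you correctly identify. If a column needs $c$ robots and the corner sends exactly $c$, then after the $(c-1)$-st settlement a \emph{single} robot remains and must still travel from the penultimate vacant node to the last one; so ``the remaining group always has at least two robots until the last settlement'' is simply false during the final leg. A singleton at an internal node of an unoriented grid cannot determine the straight-ahead port --- that is precisely why the $56$-round two-robot primitive exists --- and the fallback you mention (reusing the probe pair's stored port sequence) again needs $\Theta(\sqrt{n})$ bits of memory. Closing this requires an actual mechanism, e.g.\ having the group deposit directional information with the robots already settled at the occupied nodes of the column as it passes (so that a trailing singleton can query each occupant for the next port), or arranging the settlement order so that the forward port of the final hop is computed while two robots are still co-located; as written, the step you yourself flagged as the main obstacle is not resolved.
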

\begin{proof}
    Let us call that corner $C_0$ where all the robots gather. if the number of robots is not more than $4\sqrt{n}-4$ then these robots would disperse across the boundary nodes which takes at most $12\sqrt{n}$ rounds, i.e, $O(\sqrt{n})$ rounds.

    On the other hand, if the number of robots at $C_0$ is more than $4\sqrt{n}-4$ then the robots in pairs explore all the columns. To reach a particular column take $3\sqrt{n}$ (due to the movement of the robot across the boundary node of the grid) and then to take the round trip of a particular column take $2*56 \sqrt{n}$ rounds. After that reporting about the required number of robots at a particular column to $C_0$  takes at most $3\sqrt{n}$ rounds. Now these robots reach a particular column in at most $3\sqrt{n}$ rounds and settle at their respective node in $56 \sqrt{n}$. This implies the total number of rounds is $(3+112+3)\sqrt{n} = 118\sqrt{n}$.
\end{proof}
\begin{lemma}
    The Algorithm~\ref{alg: non-faulty dispersion} takes $195 \sqrt{n}$ rounds and $O(\log n)$ bits of memory at each robot.
\end{lemma}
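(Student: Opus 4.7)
The plan is to obtain this lemma essentially as a direct corollary of the three preceding lemmas, since the algorithm proceeds in three disjoint stages whose round costs and memory costs simply add. For the round complexity, I would first invoke Lemma~\ref{lem: inside_to_corner} to bound Stage~1 by $59\sqrt{n}$ rounds, accounting for the time needed by non-singleton groups of robots to execute the straight-line movement from internal nodes to the boundary (at most $56$ rounds per hop, $\sqrt{n}$ hops) and then to walk along the boundary to a corner ($3$ rounds per hop, up to $\sqrt{n}-1$ hops). Next I would apply Lemma~\ref{lem: gathering_at_single_corner} to add $18\sqrt{n}$ rounds for Stage~2, where the highest-ID corner robot completes a boundary tour ($12\sqrt{n}$) and the remaining corner robots travel at most two sides to reach the winning corner ($6\sqrt{n}$). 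Finally, Lemma~\ref{lem: dispersion_from_corner} contributes $118\sqrt{n}$ rounds for Stage~3, covering the column-probing round trip, the report back to the corner, the dispatch into each column, and the settling step. Summing yields $59\sqrt{n} + 18\sqrt{n} + 118\sqrt{n} = 195\sqrt{n}$.

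For the memory bound, I would go stage by stage and argue that the per-robot state fits in $O(\log n)$ bits. In Stage~1 each robot maintains a round counter (up to $59\sqrt{n}$, so $O(\log n)$ bits), its own ID ($O(\log k) \subseteq O(\log n)$), and during straight-line movement the exploring maximum-ID robot stores at most the constant-size bookkeeping for the $28$-port probe plus a constant number of port labels; the boundary-traversal subroutine only needs to remember the last two traversed ports, which is $O(1)$. In Stage~2, the traveling robot records its hop count along the boundary and the minimum ID seen so far, both $O(\log n)$. In Stage~3, the pair robots store their column index and the count of vacancies in that column, each bounded by $\sqrt{n}$, hence $O(\log n)$ bits. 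Summing across stages the robot never simultaneously needs more than $O(\log n)$ bits.

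I would not expect any significant obstacle here, since the lemma is essentially an aggregation of Lemmas~\ref{lem: inside_to_corner}, \ref{lem: gathering_at_single_corner}, and \ref{lem: dispersion_from_corner}. The only subtlety worth a sentence is verifying that the stages can be cleanly sequenced on a global round counter: because every robot knows $n$, it can locally determine when to transition from Stage~1 to Stage~2 (at round $59\sqrt{n}$) and from Stage~2 to Stage~3 (at round $77\sqrt{n}$), so robots that finish their stage early simply wait, which does not blow up the round budget and adds only the counter to the memory footprint, already accounted for above. Combining these observations gives the claimed $195\sqrt{n}$ round complexity and $O(\log n)$ per-robot memory.
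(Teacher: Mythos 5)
Your proposal is correct and follows essentially the same route as the paper: the round bound is obtained by summing the $59\sqrt{n}$, $18\sqrt{n}$, and $118\sqrt{n}$ bounds of Lemmas~\ref{lem: inside_to_corner}, \ref{lem: gathering_at_single_corner}, and \ref{lem: dispersion_from_corner}, and the memory bound by checking each stage's per-robot state. Your added remark about synchronizing the stage transitions via the global round counter is a detail the paper leaves implicit in the algorithm's while-loop thresholds, but it does not change the argument.
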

\begin{proof}
    From reaching the corner, gathering at a single corner, and dispersing all the robots in Lemma~\ref{lem: inside_to_corner}, Lemma~\ref{lem: gathering_at_single_corner} and Lemma~\ref{lem: dispersion_from_corner}, respectively, take $59\sqrt{n}+18\sqrt{n}+118\sqrt{n} = 195 \sqrt{n}$ rounds. Similarly, Stages $1, 2$, and $3$ show total memory required is $O(\log n)$ bits.
\end{proof}
From the above discussion, we conclude the following results.
\begin{theorem}
    Consider any unoriented square grid of $n$ nodes having $k$ robots such that $k \leq n$ where each robot has the memory $O(\log n)$ (in bits) then DISPERSION can be solved deterministically in $O(\sqrt{n})$ rounds.
\end{theorem}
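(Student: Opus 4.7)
The plan is to establish the theorem by invoking the three-stage analysis of Algorithm~\ref{alg: non-faulty dispersion} and summing the costs. First, I would verify that after Stage~1 every non-singleton group initially located at an internal node has reached some corner node. The technical heart of Stage~1 is the \emph{straight-line movement} subroutine: when two or more co-located robots sit at an internal node, the maximum-ID robot performs a bounded exploration of depth four around the minimum-ID robot, enumerating at most $3\cdot3\cdot3 = 27$ nodes within $O(1)$ rounds per hop. I would argue that among the (at most two) ports via which $r$ returns to the anchor, exactly one is antipodal to the backward port and hence identifies the straight-ahead direction, as discussed in the text. Combining this with the $3$-round-per-hop boundary traversal and the $\sqrt{n}$-length paths, Lemma~\ref{lem: inside_to_corner} gives a bound of $59\sqrt{n}$ rounds for Stage~1.

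Next, for Stage~2, I would cite Lemma~\ref{lem: gathering_at_single_corner}: each corner with more than one robot dispatches its maximum-ID robot along the boundary cycle, which has length at most $4\sqrt{n}$ and can be traversed in $12\sqrt{n}$ rounds using the boundary movement subroutine; the minimum corner ID is then broadcast back, and every corner's remaining robots migrate to that distinguished corner in at most $6\sqrt{n}$ more rounds. This yields the $18\sqrt{n}$ bound. The key observation here is that the minimum-ID robot never moves, so every traveling robot learns the correct destination during its tour.

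For Stage~3, I would split into two cases as in Lemma~\ref{lem: dispersion_from_corner}. If the number of robots at the distinguished corner is at most $4\sqrt{n}-4$, they can be distributed along the boundary cycle in $12\sqrt{n}$ rounds. Otherwise, pairs of robots are dispatched into each column from the corner; using $3\sqrt{n}$ rounds to reach the column, $2\cdot 56\sqrt{n}$ rounds to traverse it and count singletons already settled, and $3\sqrt{n}$ rounds to report back, after which the required number of robots is sent to each column in parallel and they disperse in $56\sqrt{n}$ rounds along the straight line. This gives a total of at most $118\sqrt{n}$ rounds. Adding the three stages produces $195\sqrt{n} = O(\sqrt{n})$ rounds.

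Finally, I would verify the memory bound. Each robot only needs to store its own ID, a round counter that ranges up to $O(\sqrt{n})$, at most $28$ port labels during the straight-line subroutine, a column index, and a counter for singletons encountered in its column; all of these fit into $O(\log n)$ bits. The main conceptual obstacle I would expect is ensuring that the straight-line movement is genuinely direction-preserving when the exploring robot $r$ happens to hit the grid boundary during its tour and returns via a non-antipodal port; I would handle this by arguing, as the text suggests, that we simply discard such premature returns and continue the exploration, which does not inflate the per-hop cost beyond the stated $56$ rounds. With these pieces assembled, the theorem follows immediately.
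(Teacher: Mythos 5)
Your proposal follows the paper's own argument essentially verbatim: the same three-stage decomposition, the same per-hop costs ($56$ rounds for straight-line movement via the $27$-node exploration, $3$ rounds per boundary hop), and the same stage bounds of $59\sqrt{n}$, $18\sqrt{n}$, and $118\sqrt{n}$ summing to $195\sqrt{n}$, together with the same $O(\log n)$ memory accounting. This matches Lemmas~\ref{lem: inside_to_corner}, \ref{lem: gathering_at_single_corner}, and \ref{lem: dispersion_from_corner} and the concluding lemma, so no further comparison is needed.
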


\subsection{Extension to Rectangular Grid}
We discussed the dispersion of non-faulty robots on the square grid of $n = \sqrt{n} \times \sqrt{n}$ nodes in Algorithm~\ref{alg: non-faulty dispersion}. The Algorithm~\ref{alg: non-faulty dispersion} can be modified to the dispersion of the non-faulty robots on the rectangular grid of $n = \ell \times \frac{n}{\ell}$, where $\ell$ is the length and $n/\ell$ is the width of the rectangle, such that $\ell, n/\ell>2$. For $n/\ell = 2$, a singleton robot can not decide the direction of the movement on the boundary since all the neighboring nodes have the same degree 3. In a square grid, length and width are of the same dimension, i.e., $\sqrt{n}$. In a rectangular grid, w.l.o.g, we consider that $\ell > n/\ell$. Henceforth, any operation that takes $\sqrt{n}$ rounds in a square grid would take at most $\ell$ rounds in a rectangular grid. Therefore, we can modify the Algorithm~\ref{alg: non-faulty dispersion} for the rectangular grid by replacing the $\sqrt{n}$ to $\ell$. This conveys that the rectangular grid's dispersion round and memory complexity would be $O(\ell)$ and $O(\log n)$. The farthest distance between two nodes in a rectangle is $\ell + n/\ell$ hopes, i.e., the diameter of the rectangle. Hence, our modified algorithm for rectangular grids has optimal round complexity and memory complexity. The correctness and complexity proof directly follows from the square grid. For termination, knowledge of $\ell$ is required.
\begin{remark}
    For $n/\ell =2$, the dispersion can be solved in a rectangular grid if the values of $n$ and $\ell$ are known. In that case, each boundary robot settles at the empty node, and if there is no empty node then movement in a straight line is executed similar to the movement on the internal node, to break the symmetry.
\end{remark}

\section{Dispersion of Faulty Robots on Unoriented Grids}\label{sec: crash-fault non-planar}
In this section, we present a deterministic algorithm for the dispersion of faulty robots, arbitrarily placed on the nodes of a unoriented grid. The algorithm achieves dispersion within $O(\sqrt{n}\log n)$ rounds while utilizing  $O(\sqrt{n}\log n)$ bits of memory for each robot. 


\subsection{Algorithm}
In the presence of faulty robots, dispersion becomes more challenging. We can not adopt the procedure discussed in Section~\ref{sec: honest_robot}. The reason behind this is: Firstly, in the case of straight-line movement of the internal robots, there might be the case that there are $O(n)$ robots present at the internal node. And after constant rounds, the highest ID robot crash before deciding the direction of movement. This may happen $O(n)$ times, therefore, the round complexity's order becomes linear. Secondly, after reaching the corner, it would be costlier (in terms of rounds) to find the minimum ID robot at the corner of the grid. Since there might be the case that every robot that goes for the round trip via boundary nodes of the grid crashes every time. In the worst case, there might be $O(n)$ robots at the corner and each failure would cost $\Omega(\sqrt{n})$ rounds. Therefore, their consecutive failure would cost sub-quadratic round complexity. Thirdly, after gathering at a single corner (non-singleton internal robots and non-internal robots), it would be costlier to send a pair of robots to the columns and count the required number of robots in each column. If $O(n)$ robots are gathered at a single corner then sending them in pair and waiting for the $O(\sqrt{n})$ rounds would further take $O(\sqrt{n})$ rounds extra, in worst case. Each round would be $O(\sqrt{n})$ rounds costly. Consequently, round complexity would be linear.

Our approach for the faulty robots' dispersion is similar to the approach, in 3 stages, of the non-faulty robots' dispersion. 
We deal with the above-discussed challenges in detail and improve our algorithm stage-by-stage.

\begin{algorithm}
    \caption{\sc Unoriented Grid Dispersion with Faulty Robots}
    \begin{algorithmic}[1]
    \Require{A square grid of $n = \sqrt{n}\times \sqrt{n}$ nodes and $k$ is the number of robots. The robots are distributed in the grid.}
    \Ensure{Dispersion of the robots over the nodes.}
    \Statex
    \Statex \textbf{Stage 1 - Gathering at corners of the grid:}
    \State Each robot $r$ count the round number (say $Counter_r$).
    \State Initially, $Counter_r = 0$.
    \While{$Counter_r<(56\log n + 59\sqrt{n})$}
        \If{robot $r$ is at the corner of the grid (node with degree 2)}
            \State Stay at corner.
        \ElsIf{robot $r$ is at node having degree 3}
            \State Reach the corner node and stay there.
        \ElsIf{Non-Singleton robots at the node having degree 4}
            \State Move in a straight direction and reach the boundary of the grid.\Comment{Movement of the robots is explained in the \textit{Straight Line Movement of the Faulty Robots}.}
        \ElsIf{Singleton robot $r$ at the node with degree 4}
            \State Stay there.
        \EndIf
    \EndWhile
    \Statex
    \Statex \textbf{Stage 2 - Gathering at a single corner:}
    \State Robots gathered (more than one) at the corner (say, $C_r$ robots).
    \For{$12\sqrt{n}\log n+6\sqrt{n}$ rounds}
        \State  $C_r$ robots send $\lceil C_r/2 \rceil$ higher ID robot (say, seeker robots) via boundary edges in minimum port direction to find the corner of minimum ID robot.
        \If{seeker and non-seeker robots meet with the same minimum robot's ID after round trip}
            \State All the corner robots (seeker and non-seeker) gathered at the minimum robot's ID corner.
        \Else
            \State Update $C_r$.
        \EndIf
    \EndFor
    \Statex
    \Statex \textbf{Stage 3 - Dispersion from the single corner:}
    \State Let the number of robots at the corner be $C_{r_m}$. 
    \State Consider the number of columns which does not require any robot as $R_0$. Initially, $R_0= 0$.
    \For{$236 \sqrt{n}\log n$  rounds} \Comment{See in Lemma~\ref{lem: faulty_corner_dispersion}.}
        \If {$C_{r_m} \leq$ number of boundary and corner nodes}
        \State Disperse the robots along the boundary and corner of the grid based on robots' ID.
            \State Exit the For Loop.
        \Else
            \State Send $\Big\lfloor \frac{C_{r_m}}{\sqrt{n}-R_0-2} \Big\rfloor$ robots to a column that requires robots (except the boundary edges of the grid) and disperse them in that column.\label{line: 26}
            \State The remaining robots report at the corner about the column which does not require robots.
            \State Update $C_{r_m}$ and $R_0$.
        \EndIf
    \EndFor
    \State All the robots settled at a unique node.
    
    \end{algorithmic}
\end{algorithm}\label{alg: faulty dispersion}

\medskip 

\noindent\textbf{Stage 1 - Gathering at corners of the grid:} Initially, robots are arbitrarily placed in the unoriented square grid. Node with degree four (internal node) having two or more robots start moving in a straight line and reach the boundary node (node with degree three) of the grid -- straight line movement of the internal faulty robots is explained later. If the robot $r$ is already at the boundary node of the grid, then $r$ moves across the boundary edge of the grid and reaches the corner (node with degree two) as explained in the movement of the robot across the boundary edges of the grid (see Section~\ref{sec: honest_robot}).  During the exploration of the boundary nodes and reaching the corner, faulty robots might crash and not reach the corner, unlike non-faulty robots.  Robots at the corner (node with degree two) remain at the corner.

\medskip 

\noindent\textbf{Stage 2 - Gathering at a single corner:} After $56\log n + 59\sqrt{n}$ rounds (shown in Lemma~\ref{lem: fauly_inside_to_corner}), all the internal non-singleton non-crashed robots, initially, placed at the node with degree four would gather at the corner. In between, all the robots at the boundary of the grid would also gather at the corner. These robots would be distributed across the four corners (gathering might not be at a single corner). After that, robots gather at a single corner that has a minimum ID robot known as whether crashed or non-crashed. All the corners which have more than two robots would send half of the higher ID robots (say, seeker robots) available at that node across the boundary edge of the grid from the minimum port number and reach their initial corner after exploring the rest of the three corners. The seeker robots and non-seeker robots of a corner node move at the corner which has/had the minimum ID robot according to both seeker and non-seeker robots. There might be a case that either all the  seeker robots or non-seeker robots crashed, in that case, the remaining robots of a corner repeat the process to find the minimum ID robot on the boundary of the grid. This would be repeated until the seeker robots sent from the minimum port numbers meet the non-seeker robots after a round trip with the same known minimum ID robot, or at most one robot is remaining at the corner. These movements of the robots take place across the boundary edges of the grid. The boundary edge robots of the grid take $3\sqrt{n}$ rounds to reach the corner and there would be $O(\log n)$ round trip in worst case. Therefore, round complexity for this stage is $O(\sqrt{n}\log n)$ (as shown in Lemma~\ref{lem: faulty_single_corner_gathering}).

\medskip 

\noindent\textbf{Stage 3 - Dispersion from the single corner:} In $12\sqrt{n}\log n+65\sqrt{n}+56\log n$ rounds (from Lemma~\ref{lem: fauly_inside_to_corner}, and Lemma~\ref{lem: faulty_single_corner_gathering}), all the internal non-singleton robots and non-internal robots on the grid would gather at the single corner. Let us consider $C_{r_m}$ as the number of robots at the single corner and $R_0$ as the number of columns requiring no robots. Initially, $R_0$ is 0.  If the number of robots at the corner is not more than $4\sqrt{n}-4$ then disperse the robots along the boundary edges and corner of the grid based on their ID (smaller the ID nearer the position) since there is no robot on the boundary nodes of the grid. On the other hand, if the number of robots at the corner is more than $4\sqrt{n}-4$, then send $\Big\lfloor \frac{C_{r_m}}{\sqrt{n}-R_0-2} \Big\rfloor$ robots to a column that requires robots (except the boundary edges of the grid) and disperse them in that column. The remaining robots report at the corner about the column which does not require robots and update $C_{r_m}$ and $R_0$. This process is repeated until less than $4\sqrt{n}-4$ robots are left, which would be dispersed across the boundary edges of the grid. 

 \medskip 

\noindent\textbf{Straight Line Movement for the internal Faulty Robots:} We traverse similarly with the faulty robots as we did with the non-faulty robots. As discussed earlier there might be the case that maximum ID robots might crash every time before reporting the appropriate port number (port in the same direction at $180\degree$) for the traversal. It might increase the round complexity up to $O(n)$. Therefore, we consider half of the robots for finding the appropriate port number as compared to the maximum ID robot, and the other half stay at their place, waiting for the appropriate port number. If only one of the robots survived to report the appropriate port number then we know the appropriate port. Even if all the robots of a group crashed, then at most half of the robots would be left. Further, the remaining half of the robots partition themselves into two groups and repeat the process. Therefore, without moving on the right port (appropriate direction) these robots might crash $O(\log n)$  times, otherwise they will move in the appropriate direction (in a straight line). Consequently, round complexity would be $O(\sqrt{n}+\log n) = O(\sqrt{n})$. In other words, let us consider there exists an internal node with $t$ number of robots such that $t>1$ and the minimum ID robot known is $r_m$. Now, all the robots move across the minimum port number, i.e., port $1$, and take one round. Now we want to make sure that the robots move in a straight line, therefore, these robots should not move across the same port which they just crossed (say, backward port). This implies there are three ports' choices remaining in which these robots can move. Now, $\lfloor t/2 \rfloor$ maximum ID robots take the next minimum available port and explore it till four port distances away from the group of $r_m$ robots in all the possible combinations (by giving priority to minimum port number) which are $3*3*3 = 27$ total different nodes. This round trip would take at most $2*27 = 54$ rounds from one hop away from the $r_m$ while moving one hop away from the $r_m$ takes $2$ more rounds. Therefore, the total rounds in this round trip are $56$. Notice that there exist exactly two ways for the group of $t/2$ robots (even a single robot reporting would be enough to know the appropriate port number) to reach the initial position (position where other $t/2$ robots stay which may crash) of the robots. These two positions would be at $180\degree$ to each other. Further, let us consider two cases: (i) one of the ports from which the $t/2$ robots reached the initial position (where the other half of the robots are waiting for the appropriate port number) is the backward port, then the other port would be the desired one in a straight line. Therefore, all the non-crashed robots would proceed from that port in a straight line. (ii) None of the ports is the backward port, then none of the ports among these two ports are at $180\degree$ of the backward port. Therefore, one of the remaining ports is the backward port and the other one is at $180\degree$ of the backward port. In this way, we find the desired port to proceed in a straight line. Now, all the robots move in a straight line from the appropriate port and repeat the process till all of them reach the boundary edge of the grid. There might be two more possibilities with faulty robots based on the crash that $t/2$ robots crashed without reporting the port at $180\degree$ or the other half crashed, in that case, repeat the process by doing the partition of the remaining robots in two halves. There are only $\log n$ such partitions possible.  Hence, to move a single hop in a straight line takes at most $56$ rounds or at least half of the robots would crash. Therefore, it would take at most $56(\log n + \sqrt{n})$ rounds in reaching the boundary edge of the grid.

\begin{lemma}\label{lem: fauly_inside_to_corner}
    In stage 1
    , the group of robots (non-singleton non-crashed robots) placed inside the grid reach corners of the grid in at most $56\log n + 59\sqrt{n}$ rounds.
\end{lemma}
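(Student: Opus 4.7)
The plan is to decompose the Stage 1 movement of every non-singleton non-crashed internal group into two clearly separable sub-phases: (i) traveling along a straight line from an internal (degree-$4$) node to a boundary (degree-$3$) node, and (ii) traveling along the boundary from that node to a corner (degree-$2$) node. I will bound each sub-phase separately and then sum.

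For sub-phase (i), I will appeal directly to the \emph{Straight Line Movement for the internal Faulty Robots} procedure described immediately above the lemma. The crucial structural fact is that in each hop attempt, either (a) the $\lfloor t/2\rfloor$ seeker sub-group successfully reports a port at $180\degree$ to the backward port in at most $56$ rounds and the whole group advances one hop in the intended straight direction, or (b) all robots of one of the two halves crash during the $56$-round probing, in which case the surviving population at the internal node has shrunk by a factor of at least two. Since the initial group has at most $n$ robots, case (b) can occur at most $\lfloor \log n\rfloor$ times across the entire execution before either all robots crash or only one non-faulty robot remains. Consequently, the total number of hop-probing rounds spent on this sub-phase is at most $56\sqrt{n}$ (for up to $\sqrt{n}$ successful straight-line hops needed to reach a boundary node) plus $56\log n$ (for the crash-induced re-partition rounds), i.e., $56\log n + 56\sqrt{n}$ rounds in the worst case.

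For sub-phase (ii), once the surviving group sits on a boundary node, its movement is governed by the \emph{Movement of the robot across the boundary nodes of the grid} procedure introduced in Section~\ref{sec: honest_robot}, whose round bound does not depend on whether robots are faulty: the procedure is deterministic on the port labels, and any crashed robot simply disappears without affecting the surviving robots' decisions. By Lemma~\ref{lem: inside_to_corner} this sub-phase costs at most $3\sqrt{n}$ rounds (the boundary has length $\sqrt{n}-1$ and each hop in a consistent direction costs at most $3$ rounds). Therefore every non-crashed robot that was initially part of a non-singleton internal group, and that did not crash en route, reaches a corner within at most $3\sqrt{n}$ additional rounds after reaching the boundary.

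Summing the two sub-phases gives $(56\log n + 56\sqrt{n}) + 3\sqrt{n} = 56\log n + 59\sqrt{n}$, which is the claimed bound. The main obstacle is the careful bookkeeping of the crash-induced ``wasted'' rounds in sub-phase (i): one must argue that a failed $56$-round probe strictly halves the active population, so the geometric-progression charge contributes only an additive $56\log n$ term rather than multiplying the $\sqrt{n}$ factor; everything else is a routine hop-count argument inherited from the non-faulty analysis in Section~\ref{sec: honest_robot}.
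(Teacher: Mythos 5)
Your proof is correct and follows essentially the same route as the paper's: bound the internal-to-boundary phase by $56\sqrt{n}$ rounds for successful straight-line hops plus $56\log n$ rounds charged to the at most $\log n$ population-halving crash events, then add the fault-independent $3\sqrt{n}$ boundary-to-corner cost from Lemma~\ref{lem: inside_to_corner}. Your bookkeeping of why the crash charge is additive rather than multiplicative is in fact slightly more explicit than the paper's own argument.
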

\begin{proof}
       There are $54 \sqrt{n}$ rounds required if both the groups do not crash fully in the straight line movement for the internal faulty robots. But if one of the groups crashes fully, that might happen at most $\log n$ times. Hence, $56$ rounds can be wasted without finding the appropriate direction $\log n$ times. Therefore, the total number of rounds required to reach the boundary edge of the grid is $56(\log n + \sqrt{n})$. In Section~\ref{sec: honest_robot}, we have seen that a non-faulty (robot which does not crash) single robot takes $3\sqrt{n}$ robot in the movement of the robot across the boundary edge of the grid. Therefore, non-singleton non-crashed robots reach at corners of the grid in at most  $56\log n + 59\sqrt{n}$ rounds from the internal node of the grid.
\end{proof}

\begin{lemma}\label{lem: faulty_single_corner_gathering}
    In stage 2, 
    corner robots gather at a single corner in $12\sqrt{n}\log n+6\sqrt{n}$ rounds.
\end{lemma}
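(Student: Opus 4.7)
The plan is to decompose the Stage~2 cost into the per-attempt round-trip cost and the number of attempts needed before a successful identification of the target corner. For the per-attempt cost, I would invoke Lemma~\ref{lem: inside_to_corner}: a robot traversing the boundary between two adjacent corners takes at most $3\sqrt{n}$ rounds under the degree-3 boundary movement rule. A seeker group leaves its corner via the minimum boundary port and must visit the three other corners and then return to its starting corner, i.e.\ four adjacent-corner traversals, so each full round trip costs at most $4\cdot 3\sqrt{n}=12\sqrt{n}$ rounds.

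For the number of attempts I would use a halving argument. At the beginning of an attempt, a corner holding $C_r\geq 2$ robots partitions into $\lceil C_r/2\rceil$ seekers (higher IDs) and $\lfloor C_r/2\rfloor$ non-seekers. I would call the attempt \emph{successful} iff at least one seeker survives the round trip and at least one non-seeker is still present at the starting corner when the seekers return; in that case, since the surviving seekers have toured every other corner, they necessarily report the globally smallest corner-ID that is still known to exist, and both subgroups agree and migrate to that corner. If the attempt fails, then either all seekers or all non-seekers at that corner crashed during the trip, so $C_r$ drops to at most $\lceil C_r/2\rceil$. Because $C_r\leq k\leq n$, at most $\log n$ consecutive failures can occur at any one corner before $C_r$ falls to $1$ (at which point the lone robot simply stays put). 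Since the four corners run their searches in parallel, the total wall-clock time spent on round trips is at most $12\sqrt{n}\cdot\log n$.

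After a successful identification of the minimum-ID corner $c^\star$, every surviving robot at another corner travels to $c^\star$ along the boundary; the farthest source is the corner diagonally opposite to $c^\star$, which is two boundary sides away, costing $2\cdot 3\sqrt{n}=6\sqrt{n}$ rounds by Lemma~\ref{lem: inside_to_corner}. Summing gives the claimed bound $12\sqrt{n}\log n+6\sqrt{n}$. The main obstacle I anticipate is justifying the halving cleanly in the presence of \emph{disagreement} (as opposed to full extinction of a subgroup): one must check that a disagreement between seekers and non-seekers can only arise when crashes at \emph{other} corners have altered the globally known minimum during the trip, and that repeating the trip at the affected corner is still charged against that corner's own halving budget, so the $\log n$ bound survives. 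Because the four corner searches are independent and executed simultaneously, the per-corner failure counts do not compound across corners and the global round count is the maximum, not the sum, of per-corner contributions.
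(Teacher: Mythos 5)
Your proposal follows essentially the same route as the paper's own proof: a $12\sqrt{n}$ cost per boundary round trip (four sides at $3\sqrt{n}$ each), a halving argument bounding the number of failed attempts per corner by $\log n$, and a final $6\sqrt{n}$ (two sides) to migrate to the minimum-ID corner. The extra care you take in defining a ``successful'' attempt and in worrying about seeker/non-seeker disagreement is more explicit than the paper's treatment, but it does not change the decomposition or the resulting bound.
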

\begin{proof}
     Firstly, half of the robots with the highest ID at their respective corner takes a round of the whole grid via boundary edges and corner edges. This takes four boundary edge explorations for the corner robots. It takes $3\sqrt{n}$ round for one boundary of the grid (discussed in Lemma~\ref{lem: fauly_inside_to_corner}). Therefore, four boundaries take $12\sqrt{n}$ rounds. In this way, each robot gets to know about the minimum ID robot present at the corner. On the other hand, there might be the case that either half of the robots crash fully, then the remaining half (at most) of the robots take another round trip across the boundary edges of the grid. There can be at most $\log n$ such incidents. Therefore, the overall round cost might be $12\sqrt{n}\log n$ to know the minimum ID robot's corner. Secondly, the minimum ID robot can be at most two boundaries ($2\sqrt{n}$ hops) away from any corner. Therefore, to reach the particular corner takes $6\sqrt{n}$ rounds. Hence, overall rounds required to gather at a single corner are $12\sqrt{n}\log n+6\sqrt{n}$.
\end{proof}

\begin{lemma}\label{lem: correctness_faulty_single_corner_gathering}
    In Lemma~\ref{lem: faulty_single_corner_gathering}, all the non-crashed corner robots gather at the single corner.
\end{lemma}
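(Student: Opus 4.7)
The plan is to prove correctness by tracking two quantities at each corner across iterations of the halving seeker/non-seeker procedure: the \emph{known minimum ID} at that corner, and the \emph{number of surviving robots} at that corner. I would fix the ``target corner'' $C^*$ to be the one containing (originally) the globally minimum-ID robot among all corner robots after Stage~1, and then argue that every surviving corner robot eventually learns the identity and location of $C^*$ and physically reaches it inside the round budget of Lemma~\ref{lem: faulty_single_corner_gathering}.

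First I would establish the key per-iteration invariant for an arbitrary corner $c$ with population $C_r > 1$. In one iteration the $\lceil C_r/2 \rceil$ higher-ID seekers perform a round trip of all four corners along the boundary. By Lemma~\ref{lem: fauly_inside_to_corner} (and the analysis of boundary traversal from Section~\ref{sec: honest_robot}), any seeker that does not crash observes the corner populations and their minimum IDs in a bounded number of rounds; in particular, any surviving seeker returns to $c$ knowing the true global minimum ID over all corners. Two cases arise on return: (a)~at least one seeker and at least one non-seeker survive at $c$, in which case they co-locate and exchange their records, so all surviving robots at $c$ hold the correct global minimum ID and the port sequence to $C^*$; or (b)~either the entire seeker group or the entire non-seeker group has crashed, which immediately means the population at $c$ has dropped to at most $\lceil C_r / 2 \rceil$.

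I would then apply the halving argument: case~(b) can occur at most $\lceil \log n \rceil$ times at any single corner, since $C_r \le n$ initially and each occurrence at least halves the surviving population until only one robot remains (a singleton trivially satisfies the gathering condition, as it either already is the global minimum corner or is an isolated leftover that will be absorbed when a seeker from elsewhere passes by and reports the minimum). Thus within $\lceil \log n \rceil$ iterations at each corner, case~(a) must occur, so every corner with at least two surviving robots correctly learns $C^*$. Combined with the $12\sqrt{n}$-round cost per round trip (four boundary traversals at $3\sqrt{n}$ each, per Lemma~\ref{lem: fauly_inside_to_corner}), this gives the $12\sqrt{n}\log n$ term. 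The final $6\sqrt{n}$ term in Lemma~\ref{lem: faulty_single_corner_gathering} covers the movement of the now-informed surviving robots from their current corner to $C^*$, which is at most two boundary sides away.

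The main obstacle I anticipate is the bookkeeping for the singleton leftover scenario: a corner reduced to a single robot cannot itself split into seeker/non-seeker groups, so its migration to $C^*$ must be justified separately. I would handle this by noting that passing seekers from other corners (which, by the invariant above, eventually carry the true global minimum) necessarily traverse this corner's node on their round trip, and upon co-location they deliver the target information to the singleton, which can then walk at most $2\sqrt{n}$ boundary hops to $C^*$, comfortably within the $6\sqrt{n}$ slack already accounted for. A secondary subtlety is the case where $C^*$ itself loses its minimum-ID robot to a crash mid-process: since the statement says we gather at ``the corner of the minimum ID robot known,'' the minimum known to surviving seekers is monotone only if we insist on the minimum among the \emph{originally present} corner robots, whose identities are recorded by seekers on the first successful round trip; this fixes $C^*$ permanently and makes the convergence argument go through.
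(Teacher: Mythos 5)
Your proposal is structured around a per-corner invariant plus a halving argument, which is a genuinely different route from the paper: the paper proves agreement by contradiction, via a crossing argument on the boundary (if a surviving corner $C_0$ concludes the minimum is at $C_1$ while a smaller minimum was at $C_2$, then $C_0$'s seekers neither met $C_2$'s stationary non-seekers nor crossed $C_2$'s seekers in transit, forcing $C_2$ to have no surviving robots at all). However, your argument has a genuine gap at exactly the point that carries the weight of the lemma. You assert that ``any surviving seeker returns to $c$ knowing the true global minimum ID over all corners,'' but this is not justified and is not true in general: different corners complete their first \emph{successful} round trip at different times (because of the repeated crash-and-retry iterations), and a seeker that tours the boundary only after every robot of the minimum-ID corner has crashed will never learn that ID. So two surviving corners can, in principle, end their procedures holding different ``known minima'' and walk to different corners --- which is precisely the failure mode the lemma must exclude. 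Your closing patch (fix $C^*$ as the minimum recorded ``on the first successful round trip'') does not repair this, because the first successful round trips of distinct corners are not simultaneous and need not observe the same surviving population; fixing $C^*$ per corner does not give cross-corner agreement. What is needed (and what the paper supplies, albeit tersely) is an argument that whenever a surviving robot's record omits some corner's smaller minimum, every robot that ever carried that smaller record has crashed, so no survivor disagrees.

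A secondary issue is that most of your proposal --- the halving bound of $\lceil \log n\rceil$ failed iterations, the $12\sqrt{n}$ cost per round trip, and the final $6\sqrt{n}$ relocation --- re-derives the round budget of Lemma~\ref{lem: faulty_single_corner_gathering}, which is already assumed here; the present lemma is purely a correctness (agreement) statement. Your treatment of the singleton leftover is a reasonable observation that the paper glosses over, but it again presupposes that the passing seekers all deliver the \emph{same} target corner, which is the unproven step above.
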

\begin{proof}
    We prove the lemma by contradiction. Let us suppose there exist two corners, $C_1$ and $C_2$ in Lemma~\ref{lem: faulty_single_corner_gathering} such that $C_2$'s corner ID was minimum then $C_1$. But the corner $C_0$ figures out the minimum ID robot at corner $C_1$ this implies that $C_0$'s seeker robots neither crossed the $C_2$'s non-seeker robots nor met in the middle with the seeker robots of the $C_2$. This implies there are no non-seeker robots at $C_2$. Additionally, seeker robots of $C_2$ did not cross the non-seeker robots of the $C_0$. Consequently, there are no seeker robots from the corner $C_2$ exist in the grid. Therefore, there do not exist any $C_2$ robots with minimum ID. Hence, the lemma. 
\end{proof}
\begin{lemma}\label{lem: faulty_corner_dispersion}
    In stage 3, 
    dispersion from the single corner takes $O(\sqrt{n} \log n)$ rounds.
\end{lemma}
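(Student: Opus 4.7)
The plan is to decompose the cost of Stage~3 into a per-iteration cost and an iteration-count factor, then multiply. Concretely, I would show that one pass through the body of the outer \textbf{for} loop takes $O(\sqrt{n})$ rounds, and that the loop needs to execute at most $O(\log n)$ times before the algorithm either settles in the boundary branch or exhausts all robots via crashes. Multiplied together, this yields the claimed $O(\sqrt{n}\log n)$ bound and justifies the round budget $236\sqrt{n}\log n$ written into the algorithm.

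For the per-iteration cost, I would split one iteration into three phases. First, each group of $\lfloor C_{r_m}/(\sqrt{n}-R_0-2)\rfloor$ robots walks along the boundary from the gathered corner to its assigned column; by the boundary traversal analysis reused in Lemma~\ref{lem: fauly_inside_to_corner} this costs at most $3\sqrt{n}$ rounds. Second, once inside its column, each group runs the faulty straight-line primitive to disperse, which by the same $56(\log n + \sqrt{n})$ bound used in Stage~1 costs $O(\sqrt{n})$ rounds. Third, the surviving robots return along the boundary to the corner and update $C_{r_m}$ and $R_0$, another $O(\sqrt{n})$ rounds. Since all columns are serviced in parallel using disjoint boundary segments for access and disjoint interior columns for dispersion, the per-iteration total is the maximum across columns, namely $O(\sqrt{n})$.

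For the iteration count, I would argue that each outer iteration falls into one of two cases: (a) it successfully disperses robots in some column (permanently incrementing $R_0$ by the number of covered columns), or (b) it loses an entire column-group to crashes before the faulty straight-line primitive can fix the direction. Case~(a) can contribute at most $\sqrt{n}-2$ events in total across the run, but because column processing is parallel these events are absorbed into $O(1)$ iterations. Case~(b) is precisely a ``full-group crash'' in the straight-line primitive, and by the same halving argument used in Lemma~\ref{lem: fauly_inside_to_corner} such wasted iterations can occur at most $\log n$ times on any given column before its allotted robots are exhausted. Hence the outer loop exits within $O(\log n)$ iterations, and the overall cost is $O(\sqrt{n})\cdot O(\log n)=O(\sqrt{n}\log n)$. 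Once $C_{r_m}\leq 4\sqrt{n}-4$, the algorithm branches to the boundary-dispersion path, which is a single boundary walk of length $O(\sqrt{n})$ and is therefore absorbed into the bound.

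The main obstacle I anticipate is verifying that the halving amortization in case~(b) is consistent with the parallel scheduling: I must check that no single stubborn column blocks the outer loop for more than $O(\log n)$ iterations of retries, and that the updates of $C_{r_m}$ and $R_0$ at the gathered corner remain consistent when several groups return simultaneously with partial and possibly conflicting information about which columns are now covered. I expect this to reduce to reusing the halving invariant of the faulty straight-line primitive, now applied at the granularity of a column rather than a single hop, combined with a simple total-work argument over $\sqrt{n}-2$ columns.
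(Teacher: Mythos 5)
Your per-iteration accounting is essentially the paper's: one pass of the outer loop costs $O(\sqrt{n})$ rounds (the paper charges $6\sqrt{n}$ for the boundary round trip and $112\sqrt{n}$ for the in-column work, hence $118\sqrt{n}$ per iteration), with all columns serviced in parallel. The gap is in your bound on the number of iterations, which is the heart of the lemma. Your case~(a) claim that successful-dispersal events are ``absorbed into $O(1)$ iterations'' by parallelism is unjustified: nothing prevents an execution in which only a few columns are successfully covered per iteration. Your case~(b) claim that any given column can waste at most $\log n$ iterations does not follow from the halving invariant of the straight-line primitive --- that invariant halves a single travelling group during one hop's direction-finding, whereas here every new outer iteration resupplies an uncovered column with a fresh group of $\bigl\lfloor C_{r_m}/(\sqrt{n}-R_0-2)\bigr\rfloor$ robots, so there is no per-column quantity that halves. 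And even if the per-column claim held, $\sqrt{n}-2$ columns each stubborn for $\log n$ iterations at different times could force far more than $O(\log n)$ iterations under your accounting; you flag exactly this as your ``main obstacle'' but do not resolve it.

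The paper closes this gap with a global two-potential argument rather than per-column bookkeeping. In any iteration, either at least half of the dispatched groups have survivors who return and report (in which case at least half of the still-uncovered columns are marked as covered, so the column count $\sqrt{n}-R_0-2$ at least halves), or more than half of the groups vanish entirely (in which case, since the groups are of equal size, more than half of the robots remaining at the corner are gone, so $C_{r_m}$ more than halves). Each of the two potentials can halve at most $\log n$ times, so the loop terminates within $2\log n$ iterations, giving the stated $236\sqrt{n}\log n$ rounds. Recasting your case analysis at this global granularity --- groups that report versus groups that disappear, instead of columns that succeed versus columns that fail --- is what is needed to complete the proof.
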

\begin{proof}
    In Stage 3, if the number of robots at the single corner is less than $4\sqrt{n}-4$  then these robots are placed across the boundary nodes of the grid in $4 \cdot 3\sqrt{n}$ rounds based on their ID. On the other hand, if the number of robots at the single corner is more than $4\sqrt{n}-4$ then an equal number of robots are sent in each column. There can be two cases: either half or more robots crash before reporting or not. If half or more robots report at their corner then half or more of the column does not require any robots further. Now, the number of columns left for the dispersion of the robots in the grid is less than or equal to half. This case can be repeated at most $\log n$ times, then there would be no column/node for the robots' dispersion. In another case, if less than half of the column reports to their corner then more than half of the robots are settled in the grid from the corner. This case also can be repeated at most $\log n$ times, after that there would be no robot for dispersion. In the worst case, these two cases can arise alternatively, where this process can be repeated at most $2\log n$ times. As we know before reporting, there are rounds required $6\sqrt{n}$ rounds for moving on the boundary edge and $112\sqrt{n}$ for moving in the column of the grid. Therefore, dispersion from a single corner takes $236 \sqrt{n} \log n$ rounds, i.e., $O(\sqrt{n}\log n)$.
\end{proof}
\begin{lemma}
    Our algorithm for faulty setup requires $O(\sqrt{n}\log n)$ memory at each robot.
\end{lemma}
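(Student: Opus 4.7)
The plan is to bound the per-robot memory used in each of the three stages of Algorithm~\ref{alg: faulty dispersion} separately, and then observe that the overall requirement is the maximum of the three. Since the claimed bound is $O(\sqrt{n}\log n)$ and two of the three stages have already been asserted (in the narrative accompanying earlier lemmas) to use only $O(\log n)$ bits, the real work lies in isolating where the $\sqrt{n}$ factor is forced and proving a matching upper bound there.

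For Stage 1, I would verify that every piece of state kept by a robot is an $O(\log n)$-bit quantity: the round counter $\mathit{Counter}_r$ climbs only to $56\log n + 59\sqrt{n}$; the straight-line movement subroutine stores at most a constant number of ports visited during the $27$-node exploration of one hop, the identity of the current leader $r_m$, and the robot's own ID; and the boundary-traversal routine keeps only the last few ports and a local hop counter. Stage 2 is similarly easy: each corner robot stores a seeker/non-seeker flag, the smallest corner-ID it has encountered, a boundary hop counter to detect when it has returned to its home corner, the port through which it originally left, and an outer iteration index bounded by $\log n$. All of these fit in $O(\log n)$ bits.

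The main obstacle is Stage 3, where the $\sqrt{n}\log n$ term enters. I would argue that a corner-residing robot must retain more than the single counter $R_0$: to avoid re-dispatching robots into already-saturated columns, it must also record the \emph{identities} of those columns (encoded, for instance, as the port sequence from the gathered corner that leads into each column via the boundary). In the worst case, dictated by the alternating crash pattern used in Lemma~\ref{lem: faulty_corner_dispersion}, only a constant number of new columns may be certified as full per outer iteration, so the list of saturated columns may grow to $\Theta(\sqrt{n})$ entries before the loop terminates. Since at least $\Omega(\log n)$ bits are needed to distinguish among the $\Theta(\sqrt{n})$ possible column labels, this list consumes $\Theta(\sqrt{n}\log n)$ bits. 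Every other Stage 3 quantity, namely $C_{r_m}$, the round counter (capped at $236\sqrt{n}\log n$), the outer iteration count ($O(\log n)$), and the current column pointer, fits in $O(\log n)$ bits and is dominated by the column-list storage.

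Combining the three stage bounds gives a per-robot memory requirement of $\max\{O(\log n),\,O(\log n),\,O(\sqrt{n}\log n)\} = O(\sqrt{n}\log n)$, which is exactly the claim. The delicate point I would make explicit in the write-up is that the $\sqrt{n}\log n$ overhead is genuinely incurred only by the robots that remain at the gathered corner while Stage 3 executes; migrating or dispersed robots do not need to carry the column list with them, which keeps the argument clean and avoids inflating the bound further.
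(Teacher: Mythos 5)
Your proposal is correct and follows essentially the same route as the paper: stage-by-stage accounting with $O(\log n)$ bits for Stages 1 and 2, and the $O(\sqrt{n}\log n)$ term arising in Stage 3 from per-column bookkeeping at the gathered corner ($\sqrt{n}$ columns times $O(\log n)$ bits each; the paper phrases this as storing each column's position and required robot count rather than a saturated-column list, but it is the same tally). Your side remark that only a constant number of columns is certified per outer iteration is inconsistent with the halving argument of Lemma~\ref{lem: faulty_corner_dispersion}, but it plays no role in the upper bound, which depends only on the list never exceeding $\sqrt{n}$ entries.
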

\begin{proof}
    Each robot stores its own ID, therefore, $\log k$ bits are required. In stage 1  $O(\log k)$ memory is required to store the initial node's minimum ID and some constant memory to track the port number before deciding the appropriate straight direction for the internal robots. In the case of boundary edge movement, only port numbering is stored, therefore, constant memory is required. 

    In stage 2, traversal takes place along the boundary edges, therefore, the minimum ID corner and its position along with some constant port numbering require $O(\log n)$ memory.

    In stage 3, the distance from the corner and the number of robots required should be stored at each robot, along with the traversal at the boundary edge and inside the grid. Therefore, $2\sqrt{n}\log n$ bits are required to store the position of the column and the required number of robots. Consequently, all these stages require the $O(\sqrt{n}\log n)$ bits memory.  
\end{proof}
From the above discussion, we have the following results.

\begin{theorem}
    Consider any unoriented square grid of $n$ nodes in the presence of any number of faulty robots where $k$ is the number of robots, in which, each robot has access to memory $O(\sqrt{n}\log n)$ bits then DISPERSION can be solved deterministically in $O(\sqrt{n}\log n)$ rounds.
\end{theorem}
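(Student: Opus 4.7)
The plan is to obtain the theorem by composing the three stage-wise lemmas already established for Algorithm~\ref{alg: faulty dispersion}. First I would argue correctness stage by stage. In Stage~1, the straight-line movement procedure either successfully identifies the appropriate $180\degree$ port or forces a halving of the surviving group; since this halving can happen at most $\log n$ times before either the direction is confirmed or only a singleton remains, every non-singleton group of internal robots either arrives at a boundary node or shrinks to a singleton that stays in place (and therefore already occupies a distinct node). In Stage~2, Lemma~\ref{lem: correctness_faulty_single_corner_gathering} guarantees that all non-crashed corner robots converge to the corner associated with the globally minimum known ID, because the seeker/non-seeker halving scheme ensures that whenever both halves at some corner $C_2$ crash entirely, $C_2$ contributes nothing to the final gathering and the minimum-ID information is preserved elsewhere. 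In Stage~3, each column-filling attempt either places robots on the grid (reducing $C_{r_m}$) or learns that certain columns are already full (increasing $R_0$), so after each successful reporting round either the pool of remaining robots or the pool of remaining columns is halved, bounding the number of retry iterations by $2\log n$; any residual batch of fewer than $4\sqrt{n}-4$ robots is disposed of along the boundary.

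Next I would collect the round counts: Stage~1 contributes $56\log n + 59\sqrt{n}$ rounds by Lemma~\ref{lem: fauly_inside_to_corner}, Stage~2 contributes $12\sqrt{n}\log n + 6\sqrt{n}$ rounds by Lemma~\ref{lem: faulty_single_corner_gathering}, and Stage~3 contributes $236\sqrt{n}\log n$ rounds by Lemma~\ref{lem: faulty_corner_dispersion}. Summing these yields a total of $O(\sqrt{n}\log n)$ rounds, matching the claimed time bound. For memory, I would account for the storage at each robot: an $O(\log k)$ identifier, $O(\log n)$ bits for round counters and port bookkeeping during straight-line and boundary traversals (Stages~1 and 2), plus the dominant Stage~3 cost in which a robot participating in the corner-to-column dispatch must cache the list of already-saturated columns; this list has up to $\sqrt{n}$ entries of $O(\log n)$ bits each, giving $O(\sqrt{n}\log n)$ bits per robot.

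The main obstacle I anticipate is ensuring that the three halving-based arguments compose correctly rather than multiplicatively. Concretely, I must verify that the $\log n$-factor overhead incurred in Stage~1 does not stack with those of Stages~2 and 3, which requires noticing that each stage's halving counter is bounded independently by the adversarial crash schedule restricted to that stage (the robot populations being analyzed are disjoint in time). A secondary bookkeeping subtlety is that $R_0$ must be consistently updated at the gathering corner even when the reporting robots arrive in distinct subsets across iterations; handling this simply requires the corner robots to store $R_0$ as a bit-vector of length $\sqrt{n}$, which is already accounted for in the $O(\sqrt{n}\log n)$ memory budget. Once these two composition issues are verified, the theorem follows directly by adding the three stage bounds.
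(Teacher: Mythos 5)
Your proposal is correct and follows essentially the same route as the paper: the theorem is obtained by composing Lemmas~\ref{lem: fauly_inside_to_corner}, \ref{lem: faulty_single_corner_gathering}, \ref{lem: correctness_faulty_single_corner_gathering}, and \ref{lem: faulty_corner_dispersion} together with the memory lemma, summing the stage-wise round bounds to get $O(\sqrt{n}\log n)$ and noting that the Stage~3 column bookkeeping dominates the per-robot memory at $O(\sqrt{n}\log n)$ bits. Your added remarks on the non-multiplicative stacking of the $\log n$ halving counters and the $R_0$ bit-vector are sensible verifications but do not change the argument.
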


\subsection{Extension to Rectangular Grid}
We discussed the dispersion of faulty robots on the square grid of $n = \sqrt{n} \times \sqrt{n}$ nodes in Algorithm~\ref{alg: faulty dispersion}. The Algorithm~\ref{alg: faulty dispersion} can be modified to the dispersion of the faulty robots on the rectangular grid of $n = \ell \times \frac{n}{\ell}$, where $\ell$ is the length and $n/\ell$ is the width of the rectangle such that $\ell, n/\ell>2$. For $n/\ell = 2$,
a singleton robot can not decide the direction of the movement on the boundary since all the neighboring
nodes have the same degree 3. In a square grid, length and width are of the same dimension, i.e., $\sqrt{n}$. In a rectangular grid, w.l.o.g, we consider that $\ell > n/\ell$. Henceforth, any operation that takes $\sqrt{n}$ rounds in a square grid would take at most $\ell$ rounds in a rectangular grid in Stage 1, Stage 2, and Stage 3 of the algorithm~\ref{alg: faulty dispersion}. In Stage 3, line~\ref{line: 26} considers the longer edge with length $\ell$ for the dispersion of the robot across the columns. Therefore, we can modify the Algorithm~\ref{alg: faulty dispersion} for the rectangular grid by replacing the $\sqrt{n}$ to $\ell$. This conveys the rectangular grid's dispersion round and memory complexity would be $O(\ell \log n)$ and $O(\ell \log n)$. The correctness and complexity proof directly follows from the square grid. For termination, knowledge of $\ell$ is required.

\begin{remark}
    For $n/\ell =2$, the dispersion can be solved in a rectangular grid if the values of $n$ and $\ell$ are known. In that case, each boundary robot settles at the empty node, and if there is no empty node then movement in a straight line is executed similar to the movement on the internal node, to break the symmetry.
\end{remark}
 
\section{Conclusion and Future Work}\label{sec: conclusion}
In this paper, we studied dispersion for distinguishable mobile robots on a port-labeled square grid in an arbitrary configuration: oriented with fault and unoriented with and without fault.  We presented oriented and unoriented grids with and without fault, respectively, having the round complexity $O(\sqrt{n})$ and $O(\log n)$ bits memory. In contrast, the unoriented grid with faulty robots required both round and memory (in bits) $O(\sqrt{n}\log n)$.
Some open questions that are raised by our work: i) What is the non-trivial lower bound in terms of $k$ when $k = o(\sqrt{n})$, for the round complexity in both setups by keeping the memory $O(\log n)$? ii) Is it possible to improve the result in the faulty setup similar to the non-faulty setup? iii) Finally, whether similar bounds hold without the prior knowledge of $n$. 

\bibliographystyle{plain}
\bibliography{reference}


\end{document}